\newtheorem{theorem}{Theorem}[section]
\newtheorem{proposition}[theorem]{Proposition}
\newtheorem{lemma}[theorem]{Lemma}
\newtheorem{corollary}[theorem]{Corollary}
\newtheorem{remark}[theorem]{Remark}
\newtheorem{example}[theorem]{Example}
\title{Portfolio analysis with mean-CVaR and mean-CVaR-skewness criteria based on mean--variance mixture models}
\author{\normalsize{$\text{Nuerxiati Abudurexiti}^a, \text{Kai He}^a, \text{Dongdong Hu}^a,\text{Svetlozar T. Rachev}^b, $}\\ \normalsize{$\text{Hasanjan Sayit}^a,  \text{Ruoyu Sun}^a$}\\ \footnotesize{$^a \text{Xi'an Jiaotong Liverpool University, Suzhou, China}$}\\ \footnotesize{$^b \text{Department of Mathematics and Statistics, Texas Tech University, Lubbock, TX, USA}$}}
\date{September 20, 2021}
\begin{document}

\maketitle


\begin{abstract}
The paper \cite{Zhao_Shangmei_And_Lu_Qing_And_Han_Liyan_And_Liu_Yong_And_Hu_Fei_2015} shows that  mean-CVaR-skewness portfolio optimization problems based on asymetric Laplace (AL) distributions can be transformed into  quadratic optimization problems for which closed form solutions can be found. In this note, we show that such a result also holds for mean-risk-skewness portfolio optimization problems when the underlying distribution belongs to a larger class of normal mean--variance mixture (NMVM) models than the class of AL distributions.
We then study the value at risk (VaR) and conditional value at risk (CVaR) risk measures of portfolios of returns with NMVM distributions.
They have closed form expressions for portfolios of normal and more generally  elliptically distributed returns, as discussed in \cite{Rockafellar_R_Tyrrell_And_Uryasev_Stanislav_2000} and in \cite{Landsman_And_Valdez_2003}. When the returns have general NMVM distributions,  these risk measures do not give closed form expressions. In this note, we give approximate closed form expressions for the VaR and CVaR of portfolios of returns with NMVM distributions.
Numerical tests show that our closed form formulas give accurate values for VaR and CVaR and shorten the computational time for portfolio optimization problems associated with VaR and CVaR considerably.

\end{abstract}

\textbf{Keywords:} Portfolio selection $\cdot$  Normal mean--variance mixtures $\cdot$ Risk measure $\cdot$ Mean-risk-skewness $\cdot$ EM algorithm
\vspace{0.1in}

\section{Introduction}



Numerous empirical studies of asset returns suggest that their distributions deviate from the normal distribution, see \cite{Cont_Rama_And_Tankov_Peter_2004} and  \cite{Schoutens_Wim_2003}. In fact, it has been demonstrated in many papers that
asset returns have fat tails and skewness, see \cite{Rachev_And_Stoyanov_And_Biglova_And_Fabozzi_2005} and  Campbell,
\cite{Lo_And_MacKinlay_1997} for example. The leptekortic features of empirical asset return data suggest that there are more realistic models than the normal distribution. It has been empirically demonstrated in numerous papers that the multivariate Generalized Hyperbolic (mGH) distributions and their sub-classes can describe multivariate financial data returns very well , see \cite{McNeil_Alexander_J_And_Frey_Rudigger_And_Embrechts_Paul_2015} and the references therein. The class of mGH distributions is a subclass of general normal mean--variance mixture (NMVM) models.

A return vector $X=(X_1, X_2, \cdots, X_n)^T$ (here and from now on $T$ denotes transpose) of $n$  assets follows an NMVM distribution if
\begin{equation}\label{GH}
X \overset{d}{=} \mu+\gamma Z+\sqrt{Z}AN_n,
\end{equation}
where $N_n$ is an $n-$dimensional standard normal random variable $N(0, I)$ (here $I$ is the $n$-dimensional identity matrix),  $Z\geq 0$ is a non-negative scalar-valued random variable which is independent of $N_n$, $A\in \mathbb{R}^{n\times n}$ is a matrix, $\mu\in R^n$ and $\gamma\in R^n$ are constant vectors which describe the location and skewness of $X$ respectively.

In general, $Z$ in (\ref{GH}) can be any non-negative valued random variable. But if $Z$ follows a Generalized Inverse Gaussian (GIG) distribution, the distribution of $X$ is called an mGH distribution, see \cite{McNeil_Alexander_J_And_Frey_Rudigger_And_Embrechts_Paul_2015} and \cite{Prause_Karsten_And_Others_1999} for the delails of the GH distributions. When $Z$ is an exponential random variable  with parameter $\lambda=1$, $X$ follows an assymetric Laplace (AL) distribution, see \cite{Mittnik_Stefan_And_Rachev_Svetlozar_T_1993}, \cite{Kozubowski_Tomasz_J_And_Rachev_Svetlozar_T_1994}, and \cite{Kozubowski_Tomasz_J_And_Podgorski_Krzysztof_2001} for the definition and financial applications of AL distributions.

A GIG distribution $W$ has three parameters $\lambda, \chi,$ and $\psi$ and its density is given by
\begin{equation}\label{GIG}
f_{GIG}(w; \lambda, \chi, \psi)=\left \{ \begin{array}{cc}
\frac{\chi^{-\lambda}(\chi \psi)^{\frac{\lambda}{2}}}{2K_{\lambda}(\sqrt{\chi \psi})}w^{\lambda-1}
e^{-\frac{\chi w^{-1}+\psi w}{2}  }, & w>0, \\
0,     & w\le 0,\\
\end{array}
\right.
\end{equation}
where $K_{\lambda}(x)=\frac{1}{2}\int_0^{\infty}y^{\lambda -1}e^{(-\frac{x(y+y^{-1})}{2})}dy$ is the modified Bessel function of the third kind with index $\lambda$ for $x>0$. The parameters in (\ref{GIG}) satisfy $\chi>0$ and $\psi\geq 0$ if $\lambda<0;$ $\chi>0$ and $\psi>0$ if $\lambda=0;$ and $\chi\geq 0$ and $\psi>0$ if $\lambda>0$. The expected value of $W$ is given by
\begin{equation}
E(W)=\frac{\sqrt{\chi/\psi}K_{\lambda+1}(\chi \psi)}{K_{\lambda}(\sqrt{\chi \psi})}.
\end{equation}
With $Z\sim GIG$ in (\ref{GH}), the density function of $X$ has the following form
\begin{equation}
f_X(x)=\frac{(\sqrt{\psi/\chi})^{\lambda}(\psi+\gamma^T\Sigma^{-1}\gamma)^{\frac{n}{2}-\lambda}}{(2\pi)^{\frac{n}{2}}|\Sigma|^{\frac{1}{2}} K_{\lambda}(\sqrt{\chi \psi})}\times \frac{K_{\lambda-\frac{n}{2}}(\sqrt{(\psi+Q(x))(\psi+\gamma^T\Sigma^{-1}\gamma)}e^{(x-\mu)^T\Sigma^{-1}\gamma}}{(\sqrt{(\chi+Q(x))(\psi+\gamma^T\Sigma^{-1}\gamma)})^{\frac{n}{2}-\lambda}},
\end{equation}
where $Q(x)=(x-\mu)^T\Sigma^{-1}(x-\mu)$ denotes the Mahalanobis distance.

The GH distribution contains several special cases that are used frequently in financial modelling. For example, the case $\lambda=\frac{n+1}{2}$ corresponds to a multivariate hyperbolic distribution, see \cite{Eberlein_Ernst_And_Keller_Ulrich_1995} and \cite{Bingham_NICHOLAS_H_And_Kiesel_Rudigger_2001} for applications of this case in  financial modelling. When $\lambda=-\frac{1}{2}$, the distribution of $X$ is called a Normal Inverse Gaussian (NIG) distribution and \cite{Barndorff-Nielsen_Ole_E_1997} proposes NIG as a good model for finance. If $\chi=0$ and $\lambda>0$, the distribution of $X$ is a Variance Gamma (VG) distribution, see \cite{Madan_Dilip_B_And_Seneta_Eugene_1990} for the details of this case. If $\psi=0$ and $\lambda<0$, the distribution of $X$ is called the generalized hyperbolic Student $t$ distribution and \cite{Aas_Kjersti_And_Haff_Ingrid_Hobaek_2006} shows that this distribution matches the empirical data very well.


The mean-risk portfolio optimization problems based on NMVM models were discussed in the recent paper \cite{Shi_And_Kim_2021}. They provided a method that transforms a high dimensional problem into a two-dimensional problem in their proposition 3.1. This enabled them to solve portfolio optimization problems associated with NMVM distributed returns more efficiently. When the returns have Gaussian distributions or more generally  elliptical distributions, the VaR and CVaR risk measures have  closed form expressions for portfolios of such returns, as shown in \cite{Owen_And_Rabinovitch_1983} and \cite{Landsman_And_Valdez_2003}. For returns with more general NMVM distributions, these risk measures do not give closed form formulas. One of the goals  of this paper is to provide approximate closed form formulas for risk measures for portfolios of NMVM distributed returns.

\begin{remark} We remark here that the NMVM models, which are more general than the mGH models, allow for models obtained by the multiple subordination  techniques applied in \cite{ShirvaniM_2021} and \cite{ShirvaniM_2021} for example.  The distributional class of  mGHs has a specific structure of the subordinators and thus fixes the shape of the distributional  tail behavior. NMVM includes the class of  all distributions representing the unit increment of multivariate subordinated L\'evy processes, see \cite{Shirvani_2021} for example. The number of subordinations changes the behavior of the tails of the distribution and can be used as a model parameter. In particular, the class of NMVM models includes multiple subordinated mGH. For example, if the first subordination determines the heavy tailedness and the skewness of the stock returns, the second subordination can be used to track transaction volume, and the third the ESG score of the company issuing the stock.  As shown in \cite{ShirvaniM_2021} and \cite{Shirvani_2021},
one subordination is not necessarily  sufficient in financial modeling. These facts necessitate the need for the discussion of the general class of NMVM models.
\end{remark}

Measuring the risk of a financial position is a complex process. It is standard to assess the riskiness of a financial position by means of convex risk measures, see \cite{Frittelli_Marco_And_Gianin_Emanuela_Rosazza_2002}, \cite{Follmer_Hans_And_Schied_Alexander_2002}, \cite{Artzner_PhilippeAndDelbaen_FreddyAndEber_Jean-MarcAndHeath_David_1999}. A convex risk measure is a convex function  $\rho: L^P\rightarrow (-\infty, +\infty]$ which is
\begin{enumerate}
\item [(a)] Monotone: $\rho(X)\le \rho(Y)$ if $X\geq Y$ almost surely.
\item [(b)] Cash-invariant:  $\rho(X+m)=\rho(X)-m$    for all $m\in R$.
\end{enumerate}
Here $L^p$ denotes the class of random variables with finite $p$ moments (here, $p\in [1, +\infty)$). A convex risk measure $\rho$ satisfies the convexity property $\rho(\lambda X+(1-\lambda)Y)\le \lambda \rho(X)+(1-\lambda)\rho(Y), \lambda \in [0, 1]$. A convex risk measure $\rho$ is called coherent if it satisfies the positive homogeneity property $\rho(\lambda X)=\lambda \rho(X)$  for any $\lambda>0$. The quantity $\rho(X)$ can be viewed as the minimal capital that has to be added to the financial position $X$ to make it acceptable. A risk measure $\rho$ with the above two properties (a) and (b) is called a monetary risk measure, see \cite{Follmer_Hans_And_Schied_Alexander_2002}. Monetary risk measures are also defined through an associated acceptable set.  A subset $\mathcal{A}$ of $\mathcal{X}$ is called acceptable if  $X\in \mathcal{A}$ and $Y\geq X$ implies $Y\in \mathcal{A}$. A risk measure
$\rho(X)$ associated with an acceptable set $\mathcal{A}$ is defined to be the minimum capital that has to be added to $X$ to make it acceptable, i.e.,
\begin{equation}\label{5}
\rho(X):=\inf\{a\in R| X+a\in \mathcal{A}\}.
\end{equation}
A risk measure can also be defined by (\ref{5}) for a given acceptable set $\mathcal{A}$. For any risk measure $\rho$, its associated acceptance set is given by $\mathcal{A}_{\rho}=\{X\in \mathcal{X}| \rho(X)\le 0\}$. The risk measure $\rho$ can be recovered from $\mathcal{A}_{\rho}$ as in (\ref{5}). The concept of coherent risk measures  was first introduced in the seminal paper \cite{Artzner_PhilippeAndDelbaen_FreddyAndEber_Jean-MarcAndHeath_David_1999}, also see \cite{Malevergne_Yannick_And_Sornette_Didier_2006}.  Convex risk measures were introduced and studied in \cite{Frittelli_Marco_And_Gianin_Emanuela_Rosazza_2002}, \cite{Heath_David_2000}, and \cite{Follmer_Hans_And_Schied_Alexander_2002}.


A monetary risk measure $\rho$ is law invariant if $\rho(X)=\rho(Y)$ whenever $X$ and $Y$ are equal in distribution. The value at risk (VaR) is a law invariant  monetary risk measure. For any $\alpha\in (0, 1)$, the value at risk at significance level $\alpha$ is denoted by $VaR_{\alpha}$ and its acceptance set is given by
\[
\mathcal{A}_{var_{\alpha}}=\{X\in \mathcal{X}| P(X<0)\le \alpha\}.
\]
The VaR is given by
\begin{equation}
\begin{split}
VaR_{\alpha}(X)&=\inf \{m\in R| X+m\in \mathcal{A}_{var_{\alpha}}\}\\
&=-\inf \{c\in R| P(X\le c)> \alpha\}=-q_X^+(X),\\
\end{split}
\end{equation}
where $q_X^+(\alpha)$ is the upper $\alpha$-quantile of $X$. As pointed out in \cite{Artzner_PhilippeAndDelbaen_FreddyAndEber_Jean-MarcAndHeath_David_1999}, the VaR lacks the subadditivity property. In particular, VaR is a non-convex function and portfolio optimization problems with it lead to multiple local extremes. Therefore portfolio optimization with VaR is computationally expensive.

Recently, the risk measure CVaR has become popular both in academia and in finance. The CVaR is a coherent risk measure (see, e.g., \cite{Acerbi_Carlo_And_Tasche_Dirk_2002}) and therefore has some favorable  properties that VaR lacks. The conditional value at risk (average value at risk, tail value at risk, or expected shortfall) is defined for any $\alpha \in (0, 1)$ as follows
\[
CVaR_{\alpha}(X)=:-\frac{1}{\alpha}\int_0^{\alpha}Var_{\beta}(X)d\beta=-\frac{1}{\alpha}E[X1_{\{X\le -VaR_{\alpha}(X)\}}].
\]
Since $VaR_{\alpha}$ is decreasing in $\alpha$, we clearly have $CVaR_{\alpha}(X)\geq VaR_{\alpha}(X)$. In fact, CVaR is the smallest coherent risk measure which is law invariant and dominates the VaR.

It is well known that a convex measure $\rho$ is law invariant if and only if
\begin{equation}\label{cv1}
\rho(X)=\sup_{Q\in \mathcal{M}_1((0, 1])}\big [\int_{(0, 1]} CVaR_{\alpha}(X)Q(d\alpha)-\beta^{min}(Q)\big ],
\end{equation}
where
\begin{equation}\label{cv2}
\beta^{min}(Q)=\sup_{X\in \mathcal{A}_{\rho}}\int_{[0, 1]}CVaR_{\alpha}(X)Q(d\alpha),
\end{equation}
and $\mathcal{M}_1((0, 1])$ is the space of probability measures on $(0, 1]$. A coherent risk measure $\rho$ is law invariant if and only if
\begin{equation}\label{cv3}
\rho(X)=\sup_{Q\in \mathcal{M}}\int_{(0, 1]}CVaR_{\alpha}(X)Q(d\alpha)   \end{equation}
for a subset $\mathcal{M}$ of $\mathcal{M}_1((0, 1])$. For the details of these results see \cite{Kusuoka_Shigeo_2001}, \cite{Dana_Rose-Anne_2005}, and \cite{Frittelli_Marco_And_Gianin_Emanuela_Rosazza_2005}.

\begin{remark} We remark here that the relations (\ref{cv1}), (\ref{cv2}), and (\ref{cv3}) above show that any law invariant convex risk measure can be expressed in terms of CVaR. In Section 3 of this paper, we will discuss approximate closed form formulas for CVaR for portfolios of NMVM distributed returns. One can then use these formulas in place of CVaR in (\ref{cv1}), (\ref{cv2}), and  (\ref{cv3}) above and construct simpler expressions for any law invariant convex risk measure.
\end{remark}

This paper is organized as follows. In Section 2, we study a mean-risk-skewness multi-objective optimization problem and provide closed form solutions for optimal portfolios when the return vectors follow a certain class of NMVM models.
In Section 3, we give approximate closed form expressions for the risks of the portfolios of NMVM returns when the risk measures are coherent ones. In Section 4, we present the results of numerical tests of our results.

\section{Closed form solutions for mean-risk-skewness optimal portfolios}
As mentioned in the Abstract,  \cite{Zhao_Shangmei_And_Lu_Qing_And_Han_Liyan_And_Liu_Yong_And_Hu_Fei_2015} studies  mean-CVaR-skewness optimization problems for AL distributions. The class of AL distributions considered in their paper have the mean--variance mixture form
\begin{equation}\label{xandy}
X=\gamma Z+\tau \sqrt{Z}N
\end{equation}
for two real numbers $\gamma \in R$ and $\tau>0$, where $Z\sim Exp(1)$, and it is independent of the standard normal random variable $N$. The characteristic function of $X$ is given by the formula (1) in their paper with $\mu$ replaced by $\gamma$. In this section and for the rest of the paper, we use ``$\overset{d}{=}$'' to denote the equivalence in distribution of two random variables. ``Skew'' denotes the skewness of a random variable and ``Kurt'' denotes the kurtoses of a random variable.

In the multi-dimensional case, the AL distributions they have considered have the form $X=\gamma Z+\sqrt{Z}AN_n$, where now $\gamma \in R^n$ and $A$ is an $n\times n$ matrix and $N_n$ is a $n-$dimensional standard normal random variable. For any portfolio $\omega=(\omega_1, \cdots, \omega_n)^T$, they calculated $CVaR_{\alpha}(\omega^TX)$
and $SKew(\omega^TX)$ as
\begin{equation}
\begin{split}
CVaR_{\alpha}(\omega^TX)=&(1-\ln \alpha)g(\gamma_p, \tau_p)-g(\gamma_p, \tau_p)\ln (2+\frac{\gamma_p}{g(\gamma_p, \tau_p)}),\\
Skew(\omega^TX)=&\frac{2\gamma_p^3+3\gamma_p\tau_p^2}{(\gamma_p^2+\tau_P^2)^{\frac{3}{2}}},\\
\end{split}
\end{equation}
where $\gamma_p, \tau_p,$ and $g(\mu_p, \tau_p)$ are given as in (9), (11), and in (16) in their paper respectively. Note that here $\gamma_p$ represents $\mu_p$ in their paper.

In their paper, they considered the mean-CVaR-skewness multi-objective problem
\begin{equation}\label{mrhoso}
\left \{
\begin{aligned}
\min_{\omega\in R^n}&\;\;\;CVaR_{\beta}(\omega^TX),\\
 \max_{\omega \in R^n}&\;\;\;Skew(\omega^TX),\\
 s.t.&\;\;\; E(\omega^TX)=r,\; \omega^Te=1.
\end{aligned}
\right.
\end{equation}
for any return $r>0$. This problem can be written under general law invariant risk measure $\rho$ as follows:
\begin{equation}\label{mrhoso1}
\left \{
\begin{aligned}
\min_{\omega\in R^n}&\;\;\;\rho(\omega^TX),\\
 \max_{\omega \in R^n}&\;\;\;Skew(\omega^TX),\\
 s.t.&\;\;\; E(\omega^TX)=r,\; \omega^Te=1.
\end{aligned}
\right.
\end{equation}
\begin{remark} We remark here that the problem (\ref{mrhoso1}) is a multi-objective problem, i.e., optimize both $\rho(\omega^TX)$ and $Skew(\omega^TX)$ simultaneously under the constraints $ E(\omega^TX)=r,\; \omega^Te=1.$ In comparison, the relevant paper \cite{Ararat_2020} discusses a static portfolio optimization problem also and their optimization problem is a single-objective problem.
\end{remark}
\begin{remark} Such mean-risk-skewness portfolio optimization problem as in (\ref{mrhoso1}) have been studied in many papers in the past, see \cite{Hiroshi_Ken_1995}, \cite{Zhao_Shangmei_And_Lu_Qing_And_Han_Liyan_And_Liu_Yong_And_Hu_Fei_2015}, and the references therein, for example. This is because, as stated in the introduction of \cite{Hiroshi_Ken_1995}, many investors prefer a positively skewed distributions to a negative one and
also larger skewed distributions compared to smaller skewed distributions if the mean and the risk are the same.
\end{remark}

In their paper, they observe that CVaR, as was shown in the proof of their Theorem 2, is an increasing function of $\tau_p$ and Skew is a decreasing function of $\tau_p$. They concluded, therefore, that minimizing $\tau_p$ gives the same solution as (\ref{mrhoso}) above and also it translates into the quadratic optimization problem (17) in their paper.

In this paper, we take a different approach than \cite{Zhao_Shangmei_And_Lu_Qing_And_Han_Liyan_And_Liu_Yong_And_Hu_Fei_2015}
for the solution of the problem (\ref{mrhoso}) above and obtain similar results for $X=\gamma Z+\sqrt{Z}AN_n$ when $Z$ is more general than $Exp(1)$ and also when CVaR is replaced by a general law-invariant risk measure as in (\ref{mrhoso1}).

We first fix some notation. We denote by $A_1^c, A_2^c, \cdots, A_n^c$ the column vectors of $A$. We assume that $A$ is an invertable matrix. We  write $\gamma$ as linear combinations of $A_1^c, A_2^c, \cdots, A_n^c$, i.e.,  $\gamma=\sum_{i=1}^n\gamma_iA_i^c$. We denote by $\gamma_0=(\gamma_1, \gamma_2, \cdots, \gamma_n)^T$ the vector of the corresponding coefficients of such linear combination. As usual, we denote by $||x||=\sqrt{x_1^2+x_2^2+\cdots+x_n^2}$ the Euclidean norm of the vector $x$. Also, for each NMVM distribution $X$ as above, we define
\begin{equation}\label{y}
Y=\gamma_0Z+\sqrt{Z}N_n,
\end{equation}
and we call $Y$ the NMVM distribution associated with $X$.

For any portfolio $\omega$ we define a transformation $\mathcal{T}: R^n\rightarrow R^n$ by $\mathcal{T}\omega=x$, where $x=(x_1, x_2, \cdots, x_n)^T$ is given by
\begin{equation}\label{T}
x_1=\omega^T A_1^c,\; x_2=\omega^T A_2^c,\; \cdots,\; x_n=\omega^T A_n^c.  \end{equation}
In matrix form, this is written as $x^T=\omega^TA$. From now on,  for any domain $D$ of portfolios $\omega$, we denote by $\mathcal{R}_D$ the image of $D$ under $\mathcal{T}$.

We start our discussions with the following simple lemma. In the following lemma,  $N$ denotes a standard normal random variable that independent of the mixing distribution $Z$.
\begin{lemma}\label{lemk}
We have $\omega^TX=x^TY$ and $x^TY\overset{d}{=}x^T\gamma_0Z+\sqrt{x^Tx}\sqrt{Z}N$. Therefore we have $\rho(w^TX)=\rho(x^TY)=\sqrt{x^Tx}\rho(cos[\theta(\gamma_0, x)]Z+\sqrt{Z}N)$ for any law invariant coherent risk measure $\rho$ whenever $\omega\in R^n$ and $x\in R^n$ are related by (\ref{T}). Here $cos[\theta(\gamma_0, x)]$ is the cosine of the angle between $\gamma_0$ and $x$.
\end{lemma}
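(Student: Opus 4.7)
The plan is to break the statement into its three assertions and handle them in order, since each one sets up the next.

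For the first identity $\omega^T X = x^T Y$, I would substitute the definitions. Expanding $\omega^T X = \omega^T \gamma Z + \sqrt{Z}\,\omega^T A N_n$ and using $\gamma = \sum_{i=1}^n \gamma_i A_i^c$, the scalar $\omega^T \gamma$ becomes $\sum_i \gamma_i (\omega^T A_i^c) = \sum_i \gamma_i x_i = x^T \gamma_0$ by the definition of $\mathcal{T}$ in (\ref{T}). The matrix equation $\omega^T A = x^T$ gives $\omega^T A N_n = x^T N_n$. Hence $\omega^T X = x^T \gamma_0 Z + \sqrt{Z}\, x^T N_n$, which is exactly $x^T Y$ when $Y$ is taken from (\ref{y}). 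Note this is an almost-sure equality on the same probability space, not merely a distributional one.

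For the distributional identity, I would use that $N_n$ is standard normal and independent of $Z$. Then $x^T N_n$ is a linear combination of i.i.d.\ standard normals and therefore $x^T N_n \sim N(0, x^T x)$, which is equal in distribution to $\sqrt{x^T x}\, N$ for a standard normal $N$. Since $N_n \perp Z$, we also have $x^T N_n \perp Z$, so the joint law satisfies $(Z, x^T N_n) \stackrel{d}{=} (Z, \sqrt{x^T x}\, N)$ with $N \perp Z$. Substituting into $x^T Y = x^T \gamma_0 Z + \sqrt{Z}\, x^T N_n$ yields the claimed distributional equality $x^T Y \stackrel{d}{=} x^T \gamma_0 Z + \sqrt{x^T x}\,\sqrt{Z}\, N$.

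For the third assertion, the first equality $\rho(\omega^T X) = \rho(x^T Y)$ is immediate from the pointwise identity proved above, and the next step $\rho(x^T Y) = \rho\!\left(x^T\gamma_0 Z + \sqrt{x^T x}\,\sqrt{Z}\, N\right)$ uses only the law-invariance of $\rho$ combined with the distributional identity just established. Finally, I would factor out the positive scalar $\sqrt{x^T x}$:
\[
x^T\gamma_0 Z + \sqrt{x^T x}\,\sqrt{Z}\, N \;=\; \sqrt{x^T x}\left(\frac{x^T\gamma_0}{\sqrt{x^T x}}\, Z + \sqrt{Z}\, N\right),
\]
and apply positive homogeneity of the coherent risk measure $\rho$. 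The ratio $x^T\gamma_0/\sqrt{x^T x}$ is (up to the norm $\|\gamma_0\|$) the cosine of the angle between $\gamma_0$ and $x$, matching the form in the statement.

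I do not anticipate a real obstacle; the lemma is essentially a bookkeeping reduction that exploits the conic structure of the NMVM family together with positive homogeneity and law-invariance of $\rho$. The only point requiring care is to keep the almost-sure equality in step one separate from the distributional identity in step two, so that the move from $\rho(x^T Y)$ to $\rho\!\left(x^T \gamma_0 Z + \sqrt{x^T x}\,\sqrt{Z}\, N\right)$ is justified by law-invariance rather than mistakenly by the pointwise identity.
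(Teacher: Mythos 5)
Your proposal is correct and follows essentially the same route as the paper's own proof: expand $\omega^T X$, use $\omega^T\gamma = x^T\gamma_0$ and $x^T = \omega^T A$ to get the pointwise identity, pass to $\sqrt{x^Tx}\,N$ by the rotational invariance of $N_n$ together with law invariance of $\rho$, and finish with positive homogeneity. Your parenthetical remark that $x^T\gamma_0/\sqrt{x^Tx}$ equals the cosine only \emph{up to the factor} $\|\gamma_0\|$ is well taken --- the correct conclusion is $\rho(x^TY)=\sqrt{x^Tx}\,\rho\bigl(\|\gamma_0\|\cos[\theta(\gamma_0,x)]Z+\sqrt{Z}N\bigr)$, and the lemma's displayed formula (though not the paper's subsequent usage, e.g.\ Lemma \ref{lem2}) omits that factor.
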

\begin{proof} Note that $\omega^TX=\omega^T\gamma Z+\sqrt{Z}(\omega^TA)N_n$. We have $\omega^T\gamma=x^T\gamma_0$ and $x=\omega^TA$. Therefore $\omega^TX=x^TY$. Since $x^TN_n\overset{d}{=}\sqrt{x^Tx}N$ and $\rho$ is a law invariant risk measure, we have $\rho(w^TX)=\rho(x^TY)=\rho(x^T\gamma_0Z+\sqrt{x^Tx}\sqrt{Z}N)$. Since $x^T\gamma_0=\sqrt{x^Tx}||\gamma_0||cos[\theta(\gamma_0, x)]$ and $\rho$ is positive homogeneous, we have  $\rho(x^TY)=\sqrt{x^Tx}\rho(cos[\theta(\gamma_0, x)]Z+\sqrt{Z}N)$.
\end{proof}

The above lemma shows that optimization problems under mean-risk-skewness  criteria can be studied in the $x$-coordinate system rather than the original $\omega$-coordinate system. This has some advantage as will be seen shortly.

Below, we write the problem (\ref{mrhoso1}) in the $x$-coordinate system as follows

\begin{equation}\label{mrhos}
\left \{
\begin{aligned}
\min_{x \in R^n}&\;\;\;\rho(x^TY),\\
 \max_{x \in R^n}&\;\;\;Skew(x^TY),\\
 s.t.&\;\;\; x^Tm=r,\; x^Te_A=1,
\end{aligned}
\right.
\end{equation}
where $e_A=A^{-1}e$ and $m=\gamma_0EZ$.

Next, we will show that the problem (\ref{mrhos}) can be reduced to a quadratic optimization problem when $X$ satisfies certain conditions which are satisfied by AL distributions. We first need to calculate $\mbox{SKew}(x^TY)$.

\begin{proposition}\label{skew} For any $x=\omega^TA$,  we have
\begin{equation}\label{1a5}
\begin{split}
\mbox{StD}(x^TY)=&||x||\sqrt{||\gamma_0||^2\phi^2Var(Z)+EZ},\\
\mbox{SKew}(x^TY)=&\frac{||\gamma_0||^3\phi^3m_3(Z)+3||\gamma_0||\phi Var(Z)} {(||\gamma_0||^2\phi^2Var(Z)+EZ)^{\frac{3}{2}}}\\
\mbox{Kurt}(x^TY)=&\frac{||\gamma_0||^4\phi^4m_4(Z)+6||\gamma_0||^2\phi^2\big [EZ^3-2EZ^2EZ+(EZ)^3\big ]+3EZ^2}{\big [||\gamma_0||^2\phi^2Var(Z)+EZ\big ]^2},\\
\end{split}
\end{equation}
where $m_3(Z)=E(E-EZ)^3$, $m_4(Z)=E(Z-EZ)^4$, and $\phi=cos(x, \gamma_0)$.
\end{proposition}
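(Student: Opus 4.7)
The plan is to use Lemma \ref{lemk} to collapse $x^T Y$ to a one-dimensional mean--variance mixture and then read off the second, third and fourth central moments by a direct expansion.

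Concretely, I would set $a = x^T\gamma_0 = \|x\|\,\|\gamma_0\|\,\phi$ and $b = \|x\|$, so that Lemma \ref{lemk} yields
$$x^T Y \overset{d}{=} aZ + b\sqrt{Z}\,N,$$
where $N$ is a standard normal random variable independent of $Z$. Writing $W = aZ + b\sqrt{Z}\,N$, one has $EW = aEZ$ and
$$W - EW = a(Z-EZ) + b\sqrt{Z}\,N.$$
I would then expand $(W-EW)^k$ for $k=2,3,4$ by the binomial theorem and take expectations, conditioning on $Z$ to handle the factors of $N$. Since $EN = EN^3 = 0$, $EN^2 = 1$ and $EN^4 = 3$, every cross term carrying an odd power of $N$ vanishes, and only pure moments of $Z$ survive.

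For $k=2$ this gives $\mathrm{Var}(W) = a^2\mathrm{Var}(Z) + b^2 EZ$, which after substituting $a^2 = \|x\|^2\|\gamma_0\|^2\phi^2$ and $b^2 = \|x\|^2$ yields the claimed formula for $\mathrm{StD}(x^T Y)$. For $k=3$ the only surviving terms are $a^3 m_3(Z)$ and $3ab^2\,E[(Z-EZ)Z] = 3ab^2\,\mathrm{Var}(Z)$. For $k=4$ the surviving terms are $a^4 m_4(Z)$, $6a^2 b^2\,E[(Z-EZ)^2 Z]$ (with $E N^2=1$), and $3 b^4\,EZ^2$ (with $E N^4 = 3$); the middle piece expands as $E[(Z-EZ)^2 Z] = EZ^3 - 2\,EZ\cdot EZ^2 + (EZ)^3$. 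Dividing the third and fourth central moments by $\mathrm{Var}(W)^{3/2}$ and $\mathrm{Var}(W)^2$ respectively, the overall powers of $\|x\|$ in numerator and denominator cancel exactly, leaving expressions that depend on $x$ only through $\phi$, precisely as displayed.

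There is no real obstacle here beyond careful bookkeeping. Once Lemma \ref{lemk} reduces the $n$-dimensional mixture to the scalar representation $aZ + b\sqrt{Z}\,N$, the remainder is a mechanical moment calculation relying on the independence of $N$ from $Z$ and the standard normal moments. The only point worth flagging is the cancellation of the $\|x\|^k$ factors between numerator and denominator, which confirms scale invariance of $\mathrm{Skew}$ and $\mathrm{Kurt}$ as it must; keeping the powers of $\|x\|$ and $\|\gamma_0\|$ consistent throughout the expansion is the main source of potential arithmetic slips.
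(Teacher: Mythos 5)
Your proposal is correct and follows essentially the same route as the paper: both reduce $x^TY$ via Lemma \ref{lemk} to the scalar mixture $\|x\|\,\|\gamma_0\|\phi\,Z+\|x\|\sqrt{Z}N$ and then compute the central moments by conditioning on $Z$ and using the standard normal moments $EN^2=1$, $EN^3=0$, $EN^4=3$. The only cosmetic difference is that the paper factors $\|x\|$ out at the start (working with $Y_0=\|\gamma_0\|\phi Z+\sqrt{Z}N$ and invoking scale invariance of Skew and Kurt), whereas you carry the powers of $\|x\|$ through and cancel them at the end.
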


\begin{proof} Note that $x^TY\overset{d}{=}x^Y\gamma_0+||x||\sqrt{Z}N=||x||(||\gamma_0||\phi Z+\sqrt{Z}N)$. Therefore, on putting $Y_0=||\gamma_0||\phi Z+\sqrt{Z}N$, we have $\mbox{StD}(x^TY)=||x||\mbox{StD}(Y_0),\; \mbox{SKew}(x^TY)=\mbox{SKew}(Y_0),$ and $\mbox{Kurt}(x^TY)=\mbox{Kurt}(Y_0)$. Since $Z$ is independent of $N$, a straightforward calculation gives \[
\mbox{StD}(Y_0)=\sqrt{||\gamma_0||^2\phi^2Var(Z)+EZ}.
\]
We have $\mbox{SKew}(Y_0)=E(Y_0-EY_0)^3/ [\mbox{StD}(Y_0)]^{3}$ and again a straightforward calculation shows that $E(Y_0-EY_0)^3$ equals the numerator of $\mbox{SKew}(x^TY)$ in (\ref{1a5}). The kurtosis equals $\mbox{Kurt}(Y_0)=E(Y_0-EY_0)^4/[\mbox{StD}(Y_0)]^4$. It can be easily checked that $E(Y_0-EY_0)^4$ equals the numerator of $\mbox{Kurt}(x^TY)$ in (\ref{1a5}).
\end{proof}

\begin{remark}\label{2a3} From Proposition \ref{skew}, we observe that StD depends on $x$ through both $cos(x, \gamma_0)$ and $||x||$. However, both SKew and Kurt depend on $x$ only through the angle $cos(x, \gamma_0)$. 
\end{remark}

We need to calculate the derivative of $\mbox{SKew}(x^TY)$ with respect to $\phi $. A straightforward calculation gives us
\begin{equation}\label{skewd}
\frac{\partial \mbox{SKew}(x^TY)}{\partial \phi}=\frac{3||\gamma_{0}||^{3}\Big (m_{3}(Z)EZ  -2Var^2(Z)\Big )\phi^{2}+3||\gamma_{0}||Var(Z)EZ}{\left(||\gamma_{0}||^{2}\phi^{2}Var(Z)+EZ\right)^{\frac{5}{2}}}.
\end{equation}

From (\ref{skewd}) we see that if the mixing distribution $Z$ satisfies
\begin{equation}\label{keycc}
m_3(Z)EZ\geq 2Var^2(Z),
\end{equation}
then $\frac{\partial}{\partial\phi}SKew(x^{T}Y)\geq 0$.

\begin{remark} Clearly, if the mixing distribution $Z$ is such that $\frac{\partial}{\partial\phi}SKew(x^{T}Y)\geq 0$ for all $\phi \in [-1, 1]$, a condition which is weaker than (\ref{skewd}),  then SKew is an increasing function of $\phi$. Here we singled out the condition (\ref{keycc}) for its simplicity as a sufficient condition for $\frac{\partial}{\partial\phi}SKew(x^{T}Y)\geq 0$.
\end{remark}

Now we state the main result of this paper.
\begin{theorem}\label{2.5} Let $\rho$ be any law invariant coherent risk measure. Assume $Z$ in (\ref{xandy}) is such that $\frac{\partial}{\partial\phi}SKew(x^{T}Y)\geq 0$ for all $\phi \in [-1, 1]$. Then the solution of the problem (\ref{mrhos})
is equal to the solution of the following quadratic optimization problem:
\begin{equation}\label{quadrat}
\left \{
\begin{aligned}
\min_{x\in R^n}&\;\;\;x^Tx,\\
 s.t.&\;\;\; x^Tm=r,\; x^Te_A=1,\\
\end{aligned}
\right.
\end{equation}
and is given explicitly  by $x=sm+te_A$, where
\begin{equation}\label{mmmm}
s=2\begin{vmatrix}r&e_A^Tm\\ 1&e^T_Ae_A \end{vmatrix}/\begin{vmatrix}m^Tm&e_A^Tm\\m^Te_A&e^T_Ae_A \end{vmatrix}\;\;\;\mbox{and}\;\;\;  t=2\begin{vmatrix}m^Tm&r\\ m^Te_A&1 \end{vmatrix}/\begin{vmatrix}m^Tm&e_A^Tm\\m^Te_A&e^T_Ae_A \end{vmatrix}.
\end{equation}
\end{theorem}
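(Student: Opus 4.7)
The plan is to show that, under the two linear constraints in (\ref{mrhos}), minimizing $\|x\|^2$ simultaneously minimizes $\rho(x^T Y)$ and maximizes $\mbox{SKew}(x^T Y)$, so that the multi-objective problem collapses to the single quadratic program (\ref{quadrat}). The starting point is that $m = \gamma_0 EZ$ turns the constraint $x^T m = r$ into $x^T \gamma_0 = r/EZ$, so that on the feasible set the numerator $x^T\gamma_0$ of the cosine is pinned down; by Lemma \ref{lemk} this gives
\[
\phi := \cos\theta(\gamma_0, x) = \frac{x^T \gamma_0}{\|x\| \, \|\gamma_0\|} = \frac{r/EZ}{\|x\| \, \|\gamma_0\|},
\]
which for $r > 0$ is a strictly decreasing function of $\|x\|$ on the feasible set.

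From here the two objectives decouple cleanly. By Proposition \ref{skew} the skewness $\mbox{SKew}(x^T Y)$ depends on $x$ only through $\phi$, and the standing hypothesis $\partial\,\mbox{SKew}/\partial\phi \geq 0$ then makes the skewness a non-decreasing function of $\phi$, so maximizing $\mbox{SKew}(x^T Y)$ on the feasible set is equivalent to minimizing $\|x\|$. For the risk objective I would use Lemma \ref{lemk} to write $x^T Y \overset{d}{=} (r/EZ)Z + \|x\|\sqrt{Z}\, N$; conditionally on $Z$ this is normal with mean $(r/EZ)Z$ and variance $\|x\|^2 Z$, so for $\|x_1\| \le \|x_2\|$ one has $x_1^T Y \mid Z \le_{cx} x_2^T Y \mid Z$, and this convex-order dominance survives the mixture over $Z$. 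Invoking the Kusuoka representation (\ref{cv3}), any law-invariant coherent $\rho$ is a supremum of integrals of $\mathrm{CVaR}_\alpha$, and $\mathrm{CVaR}_\alpha$ is consistent with the convex order (as one sees from the Rockafellar--Uryasev representation), so $\rho(x_1^T Y) \le \rho(x_2^T Y)$. Combining the two monotonicities, the common optimizer of both objectives on the feasible set is the minimum-norm point, which is precisely the program (\ref{quadrat}).

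To close, I would solve (\ref{quadrat}) by Lagrange multipliers: the stationarity condition $2x = \lambda_1 m + \lambda_2 e_A$ forces $x \in \mbox{span}(m, e_A)$, so write $x = sm + t e_A$; substituting into the two constraints gives the $2\times 2$ linear system
\[
\begin{pmatrix} m^T m & e_A^T m \\ m^T e_A & e_A^T e_A \end{pmatrix}\begin{pmatrix} s \\ t \end{pmatrix} = \begin{pmatrix} r \\ 1 \end{pmatrix},
\]
which Cramer's rule resolves into the explicit formulas (\ref{mmmm}). The step I expect to cost the most is the risk-monotonicity in $\|x\|$: positive homogeneity and cash invariance alone do not give it, because $\|x\|$ enters the normalized summand $\phi Z + \sqrt{Z}\, N$ of Lemma \ref{lemk} through $\phi$ as well as through the overall scaling. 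Reducing this to a one-parameter convex-order comparison of the mixture laws and then appealing to the Kusuoka representation seems the cleanest route; once that is in place, the skewness monotonicity and the final linear algebra are essentially routine.
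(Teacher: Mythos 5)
Your proposal is correct and follows the same overall architecture as the paper's proof: the constraint $x^Tm=r$ pins down $x^T\gamma_0=r/EZ$, so $\phi$ is a decreasing function of $\|x\|$ on the feasible set; skewness, which by Proposition \ref{skew} depends on $x$ only through $\phi$, is therefore maximized, and the risk minimized, at the minimum-norm feasible point; and the resulting quadratic program is solved by Lagrange multipliers. The one place you genuinely diverge is the risk-monotonicity in $\|x\|$: the paper simply cites part (3) of Theorem 3.1 of Shi and Kim (2021), whereas you prove it from scratch via a conditional convex-order comparison of $cZ+\|x\|\sqrt{Z}N$ given $Z$ (same conditional mean, ordered conditional variances), preservation of the convex order under mixing, the convex-order consistency of $\mathrm{CVaR}_\alpha$ from the Rockafellar--Uryasev representation, and finally the Kusuoka representation (\ref{cv3}). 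This substitution makes the proof self-contained and correctly isolates the step where positive homogeneity and cash invariance alone would not suffice; its only cost is reliance on the (mild, atomless-space) hypotheses behind (\ref{cv3}), which the paper has already assumed in stating that representation. One bookkeeping remark: because you write $x=sm+te_A$ directly, your Cramer's-rule formulas for $s,t$ lack the factor $2$ appearing in (\ref{mmmm}); the paper's $s,t$ are the raw Lagrange multipliers with $x=\tfrac{1}{2}(sm+te_A)$, so the two descriptions yield the same optimal $x$ and the discrepancy is purely notational.
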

\begin{proof} From Lemma (\ref{lemk}) above we have $\rho(x^TY)=\rho(\frac{x^T\gamma_0}{||\gamma_0||}Z +||x||\sqrt{Z}N)$. Since $m=\gamma_0EZ$ from the constraint (\ref{quadrat}) we have $x^T\gamma_0=r/EZ$. We conclude that
$\rho(x^TY)=\rho(\frac{r}{||\gamma_0||EZ}Z +||x||\sqrt{Z}N)$
is a function of $||x||$ under the constraint (\ref{quadrat}). From part (3) of Theorem 3.1 of \cite{Shi_And_Kim_2021} we conclude that $\rho$ is a non-decreasing function of $||x||$. On the other hand, the condition on Skew ensures that SKew is an increasing function of $\phi=\frac{x^T\gamma_0}{||x||||\gamma_0||}$. With the constraint $x^T\gamma_0=r/EZ$, $\phi$ becomes a decreasing function of $||x||$. Therefore decreasing $||x||$ minimizes $\rho$ and maximizes Skew simultaneously. To show (\ref{mmm}), we form the Lagrangian $\mathcal{L}=x^Tx+s(r-x^Tm)+t(1-x^Te_A)$. The first order condition gives
\begin{equation}\label{el}
\frac{d\mathcal{L}}{dx}=2x-sm-te_A=0.
\end{equation}
From this we get $x^T=\frac{1}{2}sm^T+\frac{1}{2}te_A^T$ and since $x^Tm=r$ and $x^Te_A=1$, we obtain two equations $sm^Tm+te_A^Tm=2r$ and $sm^Te_A+te^T_Ae_A=2$. The solution of these two equations give (\ref{mmmm}).
\end{proof}

For the remainder of this section, we give examples of the mixing distributions $Z$ that satisfy condition (\ref{keycc}).

\begin{example} We consider the case of gamma distributions $G(\lambda, \frac{\gamma^2}{2})$ with the density function
\[
f(x)=(\frac{\gamma^2}{2})^{\lambda}\frac{x^{\lambda-1}}{\Gamma(\lambda)}e^{-\frac{\gamma^2}{2}x}1_{\{x\geq 0\}}.
\]
In this case we have $EZ=\lambda \frac{2}{\gamma^2}, Var(Z)=\lambda (\frac{}2{\gamma^2})^2,$ and $m_3(Z)=2\lambda(\frac{2}{\gamma^2})^3$. We obtain
\[
-2Var^2(Z)+m_{3}(Z)EZ=-2\lambda^{2}\left(\frac{1}{\gamma^{2}}\right)^{4}+2\lambda\left(\frac{1}{\gamma^{2}}\right)^{3}\lambda\left(\frac{1}{\gamma^{2}}\right)=0.
\]
Thus, in this case $Z$ satisfies the condition (\ref{keycc}). Note that $G(\lambda, 1)$ is an $Exp(1)$ random variable and therefore $Exp(1)$ also satisfies the condition (\ref{keycc}). Thus for asymmetric Laplace distributions the problem (\ref{mrhos}) can be reduced to a quadratic optimization problem (\ref{quadrat}). This also shows that our Theorem  \ref{2.5} above extends Theorem 2 in \cite{Zhao_Shangmei_And_Lu_Qing_And_Han_Liyan_And_Liu_Yong_And_Hu_Fei_2015} to the case of more general mean--variance mixture models.
\end{example}
\begin{example} Now consider inverse Gaussian random variables $Z\sim IG(\delta,\gamma)$ with probability density function
\begin{equation*}
\begin{split}
 f_{iG}(z)=\frac{\delta}{\sqrt{2\pi}}e^{\delta\gamma}z^{-\frac{3}{2}}e^{-\frac{1}{2}(\delta^{2}z^{-1}+\gamma^{2}z)}1_{\{z>0\}}.
\end{split}
\end{equation*}
The moments of these random variables are calculated in \cite{Hammerstein_EAv_2010}:
\begin{equation*}
\begin{split}
 E[Z^{r}]=\frac{K_{-\frac{1}{2}+r}(\delta\gamma)}{K_{-\frac{1}{2}}(\delta\gamma)}\left(\frac{\delta}{\gamma}\right)^{r},\quad\text{for $\delta,\gamma>0$},
\end{split}
\end{equation*}
where
\begin{equation*}
K_{-\frac{1}{2}}(x)=K_{\frac{1}{2}}(x)=\sqrt{\frac{\pi}{2x}}e^{x},\;\;
 K_{\lambda+1}(x)=\frac{2\lambda}{x}K_{\lambda}(x)+K_{\lambda-1}(x).
\end{equation*}
From these, we have
\begin{equation*}
\begin{split}
 E[Z]&=\frac{K_{\frac{1}{2}}(\delta\gamma)}{K_{-\frac{1}{2}}(\delta\gamma)}\left(\frac{\delta}{\gamma}\right)=\frac{\delta}{\gamma},\\
 E[Z^{2}]&=\frac{K_{\frac{3}{2}}(\delta\gamma)}{K_{-\frac{1}{2}}(\delta\gamma)}\left(\frac{\delta}{\gamma}\right)^{2}
 =\frac{\frac{1}{\delta\gamma}K_{\frac{1}{2}}(\delta\gamma)+K_{-\frac{1}{2}}(\delta\gamma)}{K_{-\frac{1}{2}}(\delta\gamma)}\left(\frac{\delta}{\gamma}\right)^{2}=\frac{\delta}{\gamma^{3}}+\left(\frac{\delta}{\gamma}\right)^{2},\\
 E[Z^{3}]&=\frac{K_{\frac{5}{2}}(\delta\gamma)}{K_{-\frac{1}{2}}(\delta\gamma)}\left(\frac{\delta}{\gamma}\right)^{3}
 =\frac{\frac{3}{\delta\gamma}K_{\frac{3}{2}}(\delta\gamma)+K_{\frac{1}{2}}(\delta\gamma)}{K_{-\frac{1}{2}}(\delta\gamma)}\left(\frac{\delta}{\gamma}\right)^{3}
 =\left(\frac{3}{\delta\gamma}\frac{K_{\frac{3}{2}}(\delta\gamma)}{K_{-\frac{1}{2}}(\delta\gamma)}+1\right)\left(\frac{\delta}{\gamma}\right)^{3}\\
 &=\left(\frac{3}{\delta\gamma}\left(\frac{1}{\delta\gamma}+1\right)+1\right)\left(\frac{\delta}{\gamma}\right)^{3}=\frac{3\delta}{\gamma^{5}}+\frac{3\delta^{2}}{\gamma^{4}}+\left(\frac{\delta}{\gamma}\right)^{3},
\end{split}
\end{equation*}
and
\begin{equation*}
\begin{split}
 Var(Z)=E[Z^{2}]-(E[Z])^{2}=\frac{\delta}{\gamma^{3}},
\end{split}
\end{equation*}
\begin{equation*}
\begin{aligned}
 m_{3}(Z)=E[Z^{3}]-3E[Z^{2}]E[Z]+2(E[Z])^{3}=\frac{3\delta}{\gamma^{5}}.
\end{aligned}
\end{equation*}
Therefore
\begin{equation*}
\begin{split}
 -2Var^2(Z)+m_{3}(Z)EZ=-2\frac{\delta^{2}}{\gamma^{6}}+3\frac{\delta^{2}}{\gamma^{6}}=\frac{\delta^{2}}{\gamma^{6}}>0.
\end{split}
\end{equation*}
This shows that inverse Gamma random variables also satisfy (\ref{keycc}).
\end{example}
\begin{remark} We should mention that in the above examples we only gave two classes of random variables that satisfy (\ref{keycc}). However, the class of random variables that satisfy (\ref{keycc}) is not restricted to these two types of random variables. One can check that the class of generalized inverse Gaussian random variables GIG also satisfies (\ref{keycc}) for certain parameter values.
\end{remark}

\section{Portfolio optimization under mean-risk criteria}
\subsection{Recent results}
The mean--variance portfolio theory was first introduced by \cite{Markowitz_Harry_1959} and since then it has been  very popular for practitioners and scholars. However, the  Markowitz portfolio theory neglects downside risk. It has been shown in many papers that downside risk can affect returns significantly, see \cite{Bollerslev_And_Todorov_2011} and  \cite{Bali_And_Cakici_2004} for example. Because of this, many alternative risk measures have been the focus of academic research, see \cite{Chekhlov_And_Uryasev_And_Zabarankin_2005}, \cite{Konno_And_Shirakawa_And_Yamazaki_1993}. Among downside risk measures, the VaR and CVaR have been extensively studied, see \cite{Rockafellar_R_Tyrrell_And_Uryasev_Stanislav_2000} and \cite{Kolm_And_Tutuncu_And_Fabozzi_2014}.

The risk measure CVaR was first studied in the context of portfolio optimization problems by \cite{Rockafellar_R_Tyrrell_And_Uryasev_Stanislav_2000,Rockafellar_R_Tyrrell_And_Uryasev_Stanislav_2002}. They  showed  that  a  mean-CVaR  optimization  problem  can be  transformed  into  a  linear  programming problem that improves the efficiency of solving portfolio optimization problems associated with CVaR significantly. However, closed form approximate formulas are still more convenient and efficient and therefore we will discuss such formulas for them below.

For elliptical distributions, the  VaR and CVaR have closed form expressions, as studied in \cite{Landsman_And_Valdez_2003}. For the general class of NMVM models, these risk measures do not give closed form expressions. In this section, we will present approximate closed form formulas for them. Below, we start by discussing the properties of portfolios of NMVM distributed returns.



Let $X$ denote the return of $n$ assets and assume that $X$ has the NMVM distribution
\begin{equation}\label{2ndx}
X=\mu+\gamma Z+\sqrt{Z}AN_n,
\end{equation}
where $\mu \in R^n$ is a constant vector and the other symbols are as in the multi-dimensional case in (\ref{xandy}) above. For any portfolio $\omega = (\omega_1,\omega_2,...,\omega_n)^T$, we have
\begin{eqnarray}\label{omex}
\omega^T X\overset{d}{=}\omega^T \mu+\omega^T\gamma Z+\sqrt{Z}\sqrt{\omega^T\Sigma\omega}N,
\end{eqnarray}
where $N$ is a standard normal random variable and $\Sigma=AA^T$. We will denote the density and distribution functions of the scalar valued random variables $\omega^TX$ by $f_{\omega}$ and $F_{\omega}$ respectively for each portfolio
$\omega$. Then, according to formula $(7)$ of \cite{Hellmich_And_Kassberger_2011}, the conditional value at risk of $-\omega^TX$ is given by
\begin{equation}\label{HK}
CVaR_{\beta}(-\omega^TX)=-\frac{1}{1-\beta}\int_{-\infty}^{F_{\omega}^{-1}(1-\beta)}yf_{\omega}(y)dy,
\end{equation}
where the quantile-function $F_{\omega}^{-1}(\cdot)$ is calculated by a root-finding method and the integral in (\ref{HK}) is calculated by numerical integration. Therefore, optimization problems like
\begin{eqnarray}\label{CVaR}
\min_{\omega\in D }CVaR(-\omega^TX)
\end{eqnarray}
in some domain $D$ of the set of portfolios are time consuming and computationally heavy. Another approach to solving
the problem (\ref{CVaR}) was proposed in \cite{Rockafellar_R_Tyrrell_And_Uryasev_Stanislav_2000,Rockafellar_R_Tyrrell_And_Uryasev_Stanislav_2002}. They introduced an auxiliary function
\begin{equation}
F_{\beta}(\omega, \alpha)=\alpha+\frac{1}{1-\beta}\int_{y\in R^m}[-\omega^Ty-\alpha]^+p(y)dy,
\end{equation}
where $\alpha$ is a real number and $p(y)$ is the $d$-dimensional probability density function of asset returns. They showed that
\begin{eqnarray}
CVaR_{\beta}(-\omega^TX)=\min_{\alpha \in R}F_{\beta}(\omega, \alpha).
\end{eqnarray}
Since $\omega^TX$ follows $f_{\omega}$, the above $F_{\beta}(\omega, \alpha)$ can also be written as
\begin{eqnarray}\label{13}
F_{\beta}(\omega, \alpha)=\alpha+\frac{1}{1-\beta}\int_{x\in R}[-x-\alpha]^+f_{\omega}(x)dx.
\end{eqnarray}
With these, the optimization problem (\ref{CVaR}) becomes
\begin{eqnarray}\label{minCVaR}
\min_{\omega\in D }CVaR(-\omega^TX)=\min_{\omega \in D}\min_{\alpha \in R}F_{\beta}(\omega, \alpha).
\end{eqnarray}
Solving the problem (\ref{minCVaR}) is also time consuming as one needs to calculate (\ref{13}) numerically for each $\omega \in D$, and then for each $\omega \in D$ the right-hand side of (\ref{minCVaR}) needs to be minimized for $\alpha$.

The recent paper by \cite{Shi_And_Kim_2021} studies portfolio optimization problems with NMVM distributed returns. More specifically,  they studied the optimization problem
\begin{equation}\label{15}
\begin{split}
&\min _{\omega \in D}\rho(\omega^TX)\\
s.t.\;\;\; & \omega^Te=1\\
&E[\omega^TX]\geq k
\end{split}
\end{equation}
for any coherent risk measure $\rho$, where $D$ is a subset of the feasible portfolio set,  $k\in R$,  and $e=(1, \cdots, 1)^T$ is an $n$-dimensional column vector in which each component equals one. They showed that the optimal solution $\omega^{\star}$ to (\ref{15}) can be expressed as
\begin{equation}\label{16}
\omega^{\star}=\Sigma^{-1}(\mu, \gamma, e)D^{-1}(\tilde{\mu}^{\star}, \tilde{\gamma}^{\star}, 1)^T,\\
\end{equation}
where
\begin{equation}\label{17}
\begin{split}
(\tilde{\mu}^{\star}, \tilde{\gamma}^{\star})=&\mbox{argmin}_{\tilde{\mu}, \tilde{\gamma}}\rho(\tilde{\mu}+\tilde{\gamma}Z+\sqrt{g(\tilde{\mu}, \tilde{\gamma})Z}N)\\
&\\
\mbox{s.t.}\;\;\; & \tilde{\mu}+\tilde{\gamma}EZ\geq k\\
\end{split}
\end{equation}
with
\begin{equation}
g(\tilde{\mu}, \tilde{\gamma})=(\tilde{\mu}, \tilde{\gamma}, 1)D^{-1} (\tilde{\mu}, \tilde{\gamma}, 1)^T,
\end{equation}
and
\begin{equation}D=
\begin{pmatrix}
\mu^T\Sigma^{-1}\mu&\gamma^T\Sigma^{-1}\mu&e^T\Sigma^{-1}\mu\\
\mu^T\Sigma^{-1}\gamma &\gamma^T\Sigma^{-1}\gamma &e^T\Sigma^{-1}\gamma \\
\mu^T\Sigma^{-1}e&\gamma^T\Sigma^{-1}e&e^T\Sigma^{-1}e\\
 \end{pmatrix}.
\end{equation}
While this approach gives an expression for the optimal portfolio as in (\ref{16}), one still needs to solve another optimization problem, namely (\ref{17}).

In the following, we will show that (\ref{16}) and (\ref{17}) can be simplified further. We first fix some notation. Similar to the case of $\gamma$ in Section 2, we write $\mu$ as a linear combination of $A_1^c, A_2^c, \cdots, A_n^c$ as $\mu=\sum_{i=1}^n\mu_iA_i^c$. We denote by $\mu_0=(\mu_1, \mu_2, \cdots, \mu_n)^T$ the column vector formed by the corresponding coefficients of this linear combination. Also, for each NMVM  distribution $X$ as in (\ref{2ndx}), we define
\begin{equation}\label{y}
Y\overset{d}=\mu_0+\gamma_0Z+\sqrt{Z}N_n.
\end{equation}
and we call $Y$ the NMVM distribution associated with $X$. In the $x$-coordinate system, problem (\ref{15}) can be written as
\begin{equation}\label{op2a2}
\begin{split}
&\min_{x\in \mathcal{R}_D}\rho(x^TY)\\
\mbox{s.t.}\;\;\; &x^Te_A=1\\
 &x^Tm\geq k.
\end{split}
\end{equation}
where $e_A=A^{-1}e$ and now $m=\mu_0+\gamma_0 EZ$.


The solution of (\ref{op2a2}) can be characterized by using the same idea as in Proposition 3.1 of \cite{Shi_And_Kim_2021}. Below, we  write down the optimal solution of (\ref{op2a2}) as a corollary of their result.

\begin{corollary} For any law invariant coherent risk measure $\rho$, the solution of (\ref{op2a2}) is given by
\begin{equation}\label{22}
x^{\star}=(\mu_0, \gamma_0, e_A)G^{-1}(\tilde{\mu}_0^{\star}, \tilde{\gamma}_0^{\star}, 1)^T,
\end{equation}
where
\begin{equation}\label{k}
\begin{split}
(\tilde{\mu}_0^{\star}, \tilde{\gamma}_0^{\star})=&\mbox{argmin}_{\tilde{\mu}_0, \tilde{\gamma}_0}\rho(\tilde{\mu}_0+\tilde{\gamma}_0Z+\sqrt{g(
\tilde{\mu}_0, \tilde{\gamma}_0)Z}N),\\
&\\
\mbox{s.t.}\;\;\; & \tilde{\mu}_0+\tilde{\gamma}_0EZ\geq k,\\
\end{split}
\end{equation}
with
\begin{equation}
g(\tilde{\mu}_0, \tilde{\gamma}_0)=(\tilde{\mu}_0, \tilde{\gamma}_0, 1)G^{-1} (\tilde{\mu}_0, \tilde{\gamma}_0, 1)^T,
\end{equation}
and
\begin{equation}G=
\begin{pmatrix}
\mu^T_0\mu_0&\gamma^T_0\mu_0&e^T_A\mu_0\\
\mu^T_0\gamma_0 &\gamma^T_0\gamma_0 &e^T_A\gamma_0 \\
\mu^T_0e_A&\gamma^T_0e_A&e^T_Ae_A\\
 \end{pmatrix}.
\end{equation}
\end{corollary}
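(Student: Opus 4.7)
The plan is to deduce this corollary as an immediate consequence of Proposition 3.1 of \cite{Shi_And_Kim_2021} applied in the $x$-coordinate system introduced in Section 2. First I would verify the equivalence of the two optimization problems under the change of variables $x^T = \omega^T A$: by the same computation used in the proof of Lemma \ref{lemk},
\[
\omega^T X \overset{d}{=} x^T\mu_0 + x^T\gamma_0 Z + \sqrt{x^T x}\sqrt{Z}\,N,
\]
since $\omega^T\mu = x^T\mu_0$, $\omega^T\gamma = x^T\gamma_0$, and $\omega^T A N_n \overset{d}{=} \sqrt{x^T x}\,N$. The constraint $\omega^T e = 1$ becomes $x^T e_A = 1$, and $E[\omega^T X] = x^T m$ with $m = \mu_0 + \gamma_0 EZ$, so (\ref{15}) and (\ref{op2a2}) have identical optimal values and their optimal portfolios are related by $(\omega^\star)^T A = x^\star$.

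Next I would observe that the associated vector $Y = \mu_0 + \gamma_0 Z + \sqrt{Z} N_n$ is itself an NMVM random vector whose mixing matrix is the identity, so its covariance parameter equals $I$. Applying Proposition 3.1 of \cite{Shi_And_Kim_2021} to $Y$ with the substitutions $\mu \to \mu_0$, $\gamma \to \gamma_0$, $e \to e_A$, and $\Sigma \to I$ then converts (\ref{16})--(\ref{17}) and the defining matrix $D$ directly into (\ref{22})--(\ref{k}) and the matrix $G$ entrywise. This reduction is the core of the argument and everything else is essentially bookkeeping.

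For readers not wishing to invoke Proposition 3.1 of \cite{Shi_And_Kim_2021} as a black box, I would also sketch the underlying two-level minimization. For any fixed pair $\tilde\mu_0 = x^T\mu_0$ and $\tilde\gamma_0 = x^T\gamma_0$, the distribution of $x^T Y$ depends on $x$ only through $\sqrt{x^T x}$, and by part (3) of Theorem 3.1 of \cite{Shi_And_Kim_2021} the law-invariant coherent risk measure $\rho$ is non-decreasing in this quantity. Hence the inner problem reduces to minimizing $x^T x$ subject to the three linear constraints $x^T\mu_0 = \tilde\mu_0$, $x^T\gamma_0 = \tilde\gamma_0$, $x^T e_A = 1$, which a Lagrangian argument analogous to the one in Theorem \ref{2.5} solves explicitly to yield (\ref{22}) together with $x^T x = g(\tilde\mu_0, \tilde\gamma_0)$. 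The outer minimization over $(\tilde\mu_0, \tilde\gamma_0)$ subject to $\tilde\mu_0 + \tilde\gamma_0 EZ \geq k$ is then precisely (\ref{k}). The principal obstacle is the monotonicity of $\rho$ in $\sqrt{x^T x}$ with $(\tilde\mu_0, \tilde\gamma_0)$ held fixed; this is not entirely elementary because it requires combining positive homogeneity with the conditional-mixture representation of $x^T Y$, but it is exactly the content of the cited part of Theorem 3.1 of \cite{Shi_And_Kim_2021} and I would invoke it rather than reprove it.
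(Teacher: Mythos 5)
Your proposal is correct and follows essentially the same route as the paper: the paper's own proof is precisely the observation that the argument of Proposition 3.1 of \cite{Shi_And_Kim_2021} applies verbatim in the $x$-coordinate system with $\Sigma=I$, which reduces their matrix $D$ to $G$ and their formulas to (\ref{22})--(\ref{k}). Your additional verification of the change of variables and the sketch of the inner/outer minimization simply make explicit the bookkeeping that the paper leaves implicit.
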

\begin{proof}
The proof follows from the same argument as in the proof of Proposition 3.1 of \cite{Shi_And_Kim_2021}. In our case here, we have $\Sigma=I$, and with this, their formula reduces to the solution in the corollary.
\end{proof}

We remark that the optimal solutions $\omega^{\star}$ in (\ref{16}) and $x^{\star}$ in (\ref{22}) are related by $x^{\star}=(\omega^{\star})^TA$.




\subsection{Closed form approximations for risks}

The result (\ref{16}) reduces the high dimensional nature of the portfolio optimization problem  (\ref{15}) to two dimensions, as claimed in \cite{Shi_And_Kim_2021}.
In particular, this approach reduces the computational time of determining the optimal portfolio significantly, as claimed in their paper. In this section, we take an alternative approach and attempt to reduce the computational time of such optimization problems also. For this, we linearly transform the portfolio space into a different coordinate system.


The following simple lemma shows that to determine the optimal portfolio, one only needs to solve a low-dimensional optimization problem in, as explained in Remark \ref{rem1} below.



\begin{lemma} \label{lem2} For any coherent risk measure $\rho$, we have
\begin{equation}\label{26}
\rho(x^TY) =-x^T\mu_0+\rho(aZ+\sqrt{Z}N)\sqrt{x^Tx},
\end{equation}
where
\[
a=||\gamma_0||cos [\theta(\gamma_0, x)],
\]
and $\theta(\gamma_0, x)$ is the angle between $\gamma_0$ and $x$.
\end{lemma}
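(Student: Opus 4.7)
The plan is to reduce Lemma \ref{lem2} to the same type of distributional manipulation used in Lemma \ref{lemk}, with the only new ingredient being the cash-invariance property applied to the deterministic shift $x^T\mu_0$. Since $Y \overset{d}{=} \mu_0+\gamma_0 Z+\sqrt{Z}N_n$ by definition, I would first compute $x^T Y$ directly: taking the inner product pulls out a deterministic term $x^T\mu_0$, a scalar-mixture term $(x^T\gamma_0)Z$, and a Gaussian-mixture term $\sqrt{Z}\,x^T N_n$.

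Next I would handle the Gaussian part in distribution. Because $N_n$ is standard normal in $\mathbb{R}^n$ and independent of $Z$, the scalar $x^T N_n$ is normal with mean zero and variance $x^T x$, so $x^T N_n \overset{d}{=}\sqrt{x^T x}\,N$ for a standard normal $N$ independent of $Z$. Combining this with $x^T\gamma_0=\|x\|\,\|\gamma_0\|\cos[\theta(\gamma_0,x)] = \sqrt{x^T x}\,a$ (the definition of $a$), I obtain
\begin{equation*}
x^T Y \;\overset{d}{=}\; x^T\mu_0 \;+\; \sqrt{x^T x}\,\bigl(aZ+\sqrt{Z}N\bigr).
\end{equation*}

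Finally I would apply the three axioms of a coherent risk measure in sequence. Law invariance allows me to pass $\rho$ through the $\overset{d}{=}$ above; cash invariance (property (b) from the introduction) peels off the deterministic shift $x^T\mu_0$, contributing $-x^T\mu_0$ to the right-hand side; and positive homogeneity extracts the nonnegative scalar $\sqrt{x^T x}$ from inside $\rho$. This yields exactly (\ref{26}).

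There is no substantive obstacle: the statement is essentially the $\mu_0$-augmented version of Lemma \ref{lemk}, and the only thing to be careful about is the order of the axiom applications (cash invariance needs the constant $x^T\mu_0$ outside of a scalar multiple, which is fine here because it appears as a pure additive shift before any homogeneity rescaling) and the sign convention $\rho(X+m)=\rho(X)-m$, which is what produces the $-x^T\mu_0$ on the right-hand side of (\ref{26}).
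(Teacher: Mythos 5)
Your proposal is correct and follows essentially the same route as the paper: both compute $x^TY\overset{d}{=}x^T\mu_0+x^T\gamma_0Z+\|x\|\sqrt{Z}N$, apply law invariance and cash invariance to remove the constant $x^T\mu_0$, and then use positive homogeneity together with $x^T\gamma_0=\|x\|\,\|\gamma_0\|\cos[\theta(\gamma_0,x)]$ to extract $\sqrt{x^Tx}$. The only difference is that you spell out the intermediate step $x^TN_n\overset{d}{=}\sqrt{x^Tx}\,N$ explicitly, which the paper leaves implicit.
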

\begin{proof} We have $x^TY\overset{d}{=}x^T\mu_0+x^T\gamma_0Z+||x||\sqrt{Z}N$. Since $\rho$ is law invariant and cash invariant, we have $\rho(x^TY)=-x^T\mu_0+\rho(x^T\gamma_0Z+||x||\sqrt{Z}N)$. Now, since $x^T\gamma_0=||x||||\gamma_0||cos[\theta(x,\gamma_0)]$ and $\rho$ is positive homogeneous, (\ref{26}) follows.
\end{proof}




\begin{remark} \label{rem1} We can also write  $x^T\mu_0=||x||||\mu_0||Cos\theta[(\mu_0, x)]$ in (\ref{26}). This shows that optimization problems like $\min_{\mathcal{R}_D}\rho(x^TY)$ on the image $\mathcal{R}_D$ of some portfolio set $D$ under the transformation defined in (\ref{T}) depend on the norm $||x||$ of $x$, the angle between $x$ and $\mu_0$, and the angle between $x$ and $\gamma_0$.
\end{remark}



Lemma (\ref{26}) shows that the risk $\rho(x^TY)$ is dependent on $\rho(aZ+\sqrt{Z}N)$. From now on we write
\begin{equation}\label{hhh}
h(a)=:\rho(aZ+\sqrt{Z}N), \; \; \; \mbox{for} \;\; a=||\gamma_0||cos[\theta(x, \gamma_0)].
\end{equation}
In the following lemma, we state some properties of $h(a)$.

\begin{lemma}\label{monotone} For any coherent risk measure $\rho$, the function
$h(a)=\rho(aZ+\sqrt{Z}N)$ is a decreasing, convex, and continuous function of $a$ in $(-||\gamma_0||, ||\gamma_0||)$.
\end{lemma}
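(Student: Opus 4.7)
The strategy is to derive all three properties directly from the axioms of a coherent risk measure (monotonicity, positive homogeneity, convexity, cash invariance) combined with the nonnegativity of the mixing variable $Z$. The function $h$ depends on $a$ only through the scaled drift term $aZ$ added to the common random variable $\sqrt{Z}N$, so the arguments are essentially pathwise domination and convex combination arguments.

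First, for monotonicity, I would fix $a_1 \le a_2$ and observe that, because $Z \ge 0$ almost surely, we have $a_1 Z \le a_2 Z$ almost surely, and hence
\begin{equation*}
a_1 Z + \sqrt{Z} N \le a_2 Z + \sqrt{Z} N \quad \text{a.s.}
\end{equation*}
Monotonicity of $\rho$ (property (a) from the introduction: $X \ge Y$ a.s.\ implies $\rho(X)\le \rho(Y)$) then gives $h(a_1) = \rho(a_1 Z + \sqrt{Z}N) \ge \rho(a_2 Z + \sqrt{Z}N) = h(a_2)$, so $h$ is decreasing.

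Second, for convexity, fix $a_1, a_2 \in (-\|\gamma_0\|, \|\gamma_0\|)$ and $\lambda \in [0,1]$. The key algebraic identity is
\begin{equation*}
(\lambda a_1 + (1-\lambda) a_2) Z + \sqrt{Z} N = \lambda \bigl(a_1 Z + \sqrt{Z} N\bigr) + (1-\lambda)\bigl(a_2 Z + \sqrt{Z} N\bigr),
\end{equation*}
which holds because the noise term $\sqrt{Z}N$ is common to both endpoints and therefore reproduces itself as $\lambda \sqrt{Z}N + (1-\lambda)\sqrt{Z}N$. Applying $\rho$ and using its convexity (which every coherent risk measure enjoys) yields $h(\lambda a_1 + (1-\lambda)a_2) \le \lambda h(a_1) + (1-\lambda) h(a_2)$.

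Third, continuity will be obtained for free: a convex function on an open interval of $\mathbb{R}$ that takes only finite values is automatically continuous on that interval. Thus after I justify that $h(a) < \infty$ for every $a \in (-\|\gamma_0\|, \|\gamma_0\|)$, continuity follows from convexity. Finiteness is the one non-cosmetic issue in the argument: it requires that the random variable $aZ + \sqrt{Z}N$ lies in the domain of $\rho$ for each such $a$, which in practice reduces to mild integrability of $Z$ (so that the underlying $L^p$-based risk measure is finite). Assuming, as in the rest of the paper, that $\rho$ is real-valued on the class of NMVM-distributed variables under consideration, this finiteness is immediate and the continuity claim follows from the standard convexity continuity theorem on $\mathbb{R}$. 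The main (very modest) obstacle is therefore merely checking this finiteness hypothesis; all three qualitative properties then drop out of the axioms without any further computation.
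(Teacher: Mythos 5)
Your proposal is correct, and its overall skeleton (establish the decreasing property, then convexity by splitting the convex combination over the common noise term $\sqrt{Z}N$, then deduce continuity from convexity on the open interval) matches the paper's proof. The one genuine difference is in the monotonicity step: the paper writes $a_2Z+\sqrt{Z}N = a_1Z+(a_2-a_1)Z+\sqrt{Z}N$ and invokes subadditivity together with $\rho((a_2-a_1)Z)\le\rho(0)=0$, whereas you use the direct pathwise domination $a_1Z+\sqrt{Z}N\le a_2Z+\sqrt{Z}N$ a.s.\ (valid since $Z\ge 0$) and apply monotonicity alone. Your route is more elementary and uses fewer axioms; the paper's route is the one inherited from Theorem 3.1 of Shi and Kim, and has the advantage of generalizing to settings where only subadditivity of the increment's risk is available rather than almost-sure ordering. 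Your convexity argument via the convexity of $\rho$ is the same as the paper's subadditivity-plus-positive-homogeneity computation, since for a coherent measure these are equivalent. Finally, you are right to flag finiteness of $h$ as the one hypothesis needed to invoke the standard fact that a finite convex function on an open interval is continuous; the paper passes over this silently, so your remark is a small improvement rather than a deviation.
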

\begin{proof}
The decreasing property of $h(a)$ follows from Theorem 3.1 of \cite{Shi_And_Kim_2021}. Here, for the sake of being self-contained, we present its proof. Take two $a_2\geq a_1$. Then the subadditivity and monotonicity of $\rho$ implies
\begin{equation}
\begin{split}
h(a_2)=\rho(a_2Z+\sqrt{Z}N)=&\rho(a_1Z+(a_2-a_1)Z+\sqrt{Z}N) \\
\le &\rho(a_1Z+\sqrt{Z}N)+\rho((a_2-a_1)Z)\\
\le &\rho(a_1Z+\sqrt{Z}N)+\rho(0)\\
=&\rho(a_1Z+\sqrt{Z}N)=h(a_1).
\end{split}
\end{equation}
The convexity of $h(a)$ follows from
\begin{equation}
\begin{split}
 h(\lambda a+(1-\lambda)b)=&\rho(\lambda [aZ+\sqrt{Z}N]+(1-\lambda)[bZ+\sqrt{Z}N] ) \\
 \le& \rho(\lambda [aZ+\sqrt{Z}N])+\rho((1-\lambda)[bZ+\sqrt{Z}N] )\\
 =& \lambda h(a)+(1-\lambda)h(b),
 \end{split}
\end{equation}
for any $1\geq \lambda \geq 0$. Since $h(a)$ is a convex function, it is continuous in the interior of its domain, which is $(-||\gamma_0||, ||\gamma_0||)$.
\end{proof}

\begin{remark} Note that from Lemma \ref{lem2}, we have $\rho(x^TY)=-x^T\mu_0+\sqrt{x^Tx}h(a)$.  Since $h(a)$ is a continuous function in  $(-||\gamma_0||, ||\gamma_0||)$, as was shown in Lemma \ref{monotone}, we have
\[
h(a)\approx \sum_{i=1}^{n-1}1_{[a_i, a_{i+1})}(a)\rho(a_iZ+\sqrt{Z}N)
\]
for an appropriately chosen partition $a_1\le a_2\le \cdots \le a_n$ of the interval $(-||\gamma_0||, ||\gamma_0||)$. Therefore we have the approximation
\begin{equation}\label{approx}
\rho(x^TY)\approx -x^T\mu_0+\sqrt{x^Tx}\sum_{i=1}^{n-1}1_{[a_i, a_{i+1})}(a)\rho(a_iZ+\sqrt{Z}N)
\end{equation}
with $a=-||\gamma_0||cos[\theta(\gamma_0, x)]$ when the mesh
$\max_{1\le i\le n-1}(a_{i+1}-a_{i})$ of the partition
 \[
 -||\gamma_0||=a_1\le a_2\le \cdots \le a_n=||\gamma_0||.
\]
is sufficiently small.
\end{remark}

\begin{remark} We remark that in order to obtain high precision, the points $a_1, a_2, \cdots, a_n$ need to be chosen to make $\max_i(a_i-a_{i-1})$ as small as possible. However, this comes with a cost, as we need to calculate $\rho(a_iZ+\sqrt{Z}N)$ for each $a_i$ to obtain an approximate value of $\rho(x^TY)$.
\end{remark}

This proposition gives an accurate approximation for the value of the risk measure when the mesh of the corresponding partition is very small. However, as stated earlier, we need to calculate $h(a_i)=\rho(a_iZ+\sqrt{Z}N)$ for many $i$.

When the value of $||\gamma_0||$ is relatively small, we can get an even simpler approximation, as stated in the following proposition. The next proposition gives an approximation for the risk measure in which one needs to calculate $h(a)$ for only two values of $a$ to obtain a good approximation. In the following, we use the following notation 
$g=:\sqrt{\sum_{i=1}^d\gamma_i^2}$.

\begin{proposition} \label{3and7} Any law invariant coherent risk measure $\rho$ can be approximated by
\begin{equation}\label{31}
\bar{\rho} (\omega^TX)=: -x^T\mu_0+\sqrt{x^Tx}\Big [\frac{1-Cos[\theta(x, \gamma_0)]}{2}h(-g)+ \frac{1+Cos[\theta(x, \gamma_0)]}{2}h(g)\Big ],
\end{equation}
where $\theta[x, \gamma_0]$ is the angle between $x^T=\omega^TA$ and $\gamma_0$.
\end{proposition}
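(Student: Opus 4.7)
The plan is to obtain this approximation by combining Lemma \ref{lem2} with a linear interpolation of the one-dimensional function $h(a)=\rho(aZ+\sqrt{Z}N)$ over the interval $[-g,g]$ where the argument $a$ actually lives.

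First I would invoke Lemma \ref{lem2} to rewrite the exact risk as
\[
\rho(\omega^{T}X)=\rho(x^{T}Y)=-x^{T}\mu_{0}+\sqrt{x^{T}x}\,h(a),\qquad a=\|\gamma_{0}\|\cos[\theta(\gamma_{0},x)].
\]
Since $g=\|\gamma_{0}\|$ and $\cos[\theta(\gamma_{0},x)]\in[-1,1]$, the argument $a$ is confined to the compact interval $[-g,g]$. This localizes the whole problem to approximating the scalar function $h$ on this interval.

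Next I would write $a$ as a convex combination of the two endpoints,
\[
a=\lambda\,g+(1-\lambda)(-g),\qquad \lambda=\frac{1+\cos[\theta(x,\gamma_{0})]}{2}\in[0,1],
\]
and replace $h(a)$ by its linear interpolant at the endpoints,
\[
h(a)\;\approx\;\lambda\,h(g)+(1-\lambda)\,h(-g)=\frac{1+\cos[\theta(x,\gamma_{0})]}{2}\,h(g)+\frac{1-\cos[\theta(x,\gamma_{0})]}{2}\,h(-g).
\]
Substituting into the exact expression from Lemma \ref{lem2} produces exactly the right-hand side of (\ref{31}), so $\bar{\rho}(\omega^{T}X)$ is obtained by replacing $h$ on $[-g,g]$ by its chord. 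To justify that this replacement is a good approximation I would appeal to Lemma \ref{monotone}: $h$ is convex and continuous on $(-\|\gamma_{0}\|,\|\gamma_{0}\|)$, so its graph and its chord agree at $\pm g$ and differ by an amount controlled by the modulus of convexity of $h$ on $[-g,g]$. When $g$ is small this interval is short, and a convex continuous function is close to affine on a short interval, which is precisely the regime identified in the sentence preceding the proposition.

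The main conceptual obstacle is that the statement is an approximation rather than an identity, so there is nothing to prove in the strict sense; the work lies in making explicit why the two-point endpoint formula is the natural choice. The key observation is that, by Lemma \ref{lem2}, the dependence of $\rho(x^{T}Y)$ on the direction of $x$ relative to $\gamma_{0}$ is completely captured by the single scalar $a=g\cos\theta$, so a one-dimensional interpolation of $h$ on $[-g,g]$ is all that is needed, and the convexity and continuity supplied by Lemma \ref{monotone} guarantee that the endpoint chord is a legitimate first-order approximation for small $g$.
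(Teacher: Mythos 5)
Your proposal is correct and follows essentially the same route as the paper's own proof: both apply Lemma \ref{lem2} to reduce the risk to $-x^T\mu_0+\sqrt{x^Tx}\,h(a)$ with $a=\|\gamma_0\|\cos[\theta(x,\gamma_0)]$ and then replace $h$ on $[-g,g]$ by its chord through $(\pm g, h(\pm g))$, which yields exactly the weights $\tfrac{1\mp\cos\theta}{2}$. Your added remark that the convexity and continuity of $h$ from Lemma \ref{monotone} control the chord error for small $g$ is a sensible elaboration of the heuristic the paper states only in the remark following the proposition.
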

\begin{proof} We have $\rho(x^TY)=-x^T\mu_0+\sqrt{x^Tx}h(a)$ from Lemma \ref{lem2}. We approximate $h(a)$ by the line that passes through $(-g, h(-g))$ and $(g, h(g))$. The equation of this line is
\[
y(a)-h(-g)=-\frac{h(-g)-h(g)}{2g}(a+g)=\frac{h(-g)+h(g)}{2}-
\frac{h(-g)+h(g)}{2}cos[\theta(x, \gamma_0)].
\]
Replacing $h(a)$ by $y(a)$ in the expression for $\rho(x^TY)$ above and combining the terms $h(-g)$ and $h(g)$ gives (\ref{31}).
\end{proof}
\begin{remark} The precision of the approximation (\ref{31}) clearly depends on the properties of the risk measure $\rho$. More specifically, the properties of $h(a)$ in (\ref{hhh}) determine the degree of accuracy of our approximation. The only information on $h(a)$ that we know is its continuity, convexity, and the decreasing property as stated in Lemma \ref{monotone}. These are not sufficient to study the  accuracy of the approximation for this proposition. However, our extensive numerical tests show that when $g$ is relatively small, which is usually the case for EM estimates of the NMVM models from empirical data, these approximations work pretty well.
\end{remark}

The above proposition simplifies the computations of optimal portfolios considerably, as one only needs to evaluate $h(a)$ at the two points $-||\gamma_0||$ and $-||\gamma_0||$.
\begin{remark} If $\gamma=0$, then $\gamma_0=0$ and the right-hand side of (\ref{31}) reduces to
\begin{equation}\label{4a8}
-x^T\mu_0+\sqrt{x^Tx}\rho(\sqrt{Z}N).
\end{equation}
The expression (\ref{4a8}) is exactly the formula for risk measures for elliptical distributions, as discussed in \cite{Landsman_And_Valdez_2003}.
\end{remark}

\begin{remark} We remark that in optimization problems like $\min_{\omega}\rho(\omega^TX)$, one can  optimize $\bar{\rho}(\omega^TX)$ instead and obtain an approximately optimal portfolio. Observe that $\bar{\rho}(x^TY)$ can also be written as
\begin{equation}\label{3a5}
  \bar{\rho}(x^TY)=-x^T\mu_0- \frac{h(-g)-h(g)}{2||\gamma_0||}x^T\gamma_0+\frac{h(g)+h(-g)}{2}\sqrt{x^Tx}.
\end{equation}
Here, $\frac{h(-g)-h(g)}{2||\gamma_0||}$ is a positive number, as $h(a)$ is a decreasing function, as stated in Lemma 2.8. The minimization and maximization of functions of the form (\ref{3a5}) were discussed in  \cite{Landsman_2008} in detail, see also \cite{Owadally_2011} and \cite{Markov_2016}.
\end{remark}

\begin{remark} We should mention that (\ref{31}) was obtained under the assumption that the portfolio space is the whole of $R^n$. For problems associated with some small portfolio space, the value of $||\gamma_0||cos[\theta(x, \gamma_0)]$ does not cover the whole interval $[-||\gamma_0||, -||\gamma_0||]$. Therefore, in such cases, (\ref{31}) needs to be adjusted appropriately.
\end{remark}




As mentioned earlier, calculating the VaR and CVaR needs numerical procedures or Monte Carlo approaches for most models of asset returns. In the past, calculations of the VaR relied on linear approximations of the portfolio risks, see, e.g., \cite{Duffie_And_Pan_1997} and  \cite{Jorion_1996}),  or Monte Carlo simulation-based tools, see, e.g., \cite{Uryasev_Stanislav_2000}, \cite{Bucay_And_Rosen_1999}, and \cite{Pritsker_1997}. Next, we present closed form approximations for them, as a corollary to Proposition \ref{3and7}.


Acerbi and Tasche (2010) defines $VaR_{\alpha}(X)$ to be the negative of the upper quantile
\[
q^{\alpha}(X)=\inf\{x\in R: P(X\le x)>\alpha\}
\]
of $X$, i.e., $VaR_{\alpha}(X)=-q^{\alpha}(X)$. When the random variable $X$ has a positive probability density function, we have $q^{\alpha}(X)=F_{X}^{-1}(\alpha)$ and so $VaR_{\alpha}(X)=-F_X^{-1}(\alpha)$. With this definition, VaR is a positive homogeneous monetary risk measure. The conditional value at risk is defined by    $CVaR_{\alpha}(X)=-E[X|X\le -VaR_{\alpha}]$. Then, for any portfolio $\omega$, the CVaR of the loss $-\omega^TX$ and
return $\omega^TX$ are given by
\begin{equation}\label{HK1}
CVaR_{\beta}(-\omega^TX)=\frac{1}{\beta}\int_{F_{\omega}^{-1}(1-\beta)}^{+\infty}yf_{\omega}(y)dy, \; \; CVaR_{\beta}(\omega^TX)=-\frac{1}{\beta}\int_{-\infty}^{F_{\omega}^{-1}(\beta)}yf_{\omega}(y)dy
\end{equation}
respectively. When $X\sim N(\mu, \sigma^2)$, a straightforward calculation shows that these risk measures have the closed form expressions
\begin{eqnarray}
VaR_{\alpha}(X)=-\mu-\sigma \Phi^{-1}(\alpha), \;\; CVaR_{\alpha}(X)=-\mu+\sigma \frac{e^{-[\Phi^{-1}(\alpha)]^2/2}}{\alpha \sqrt{2\pi}},
\end{eqnarray}
where $\Phi(\cdot)$ is the cumulative distribution function of the standard normal random variable. When $X$ is elliptically distributed, one can also express these risk measures in closed form, as in equation (2) of Landsman and Valdez (2000).

Below, we discuss this risk measure for the class $Y_a=:aZ+\sqrt{Z}N$ of random variables as this is sufficient to obtain expressions for these risk measures for general NMVM models, due to (\ref{26}).

Denoting the probability density function of $Z$ by $g(s)$, the probability density function of $Y_a$ is given by
\begin{equation}
g_a(y)=\frac{1}{\sqrt{2\pi}}\int_0^{+\infty}\frac{g(s)}{\sqrt{s}}e^{-\frac{(y-as)^2}{2s}}ds.
\end{equation}
If $Z\sim GIG(\lambda, \chi, \psi)$, then $Y_{a}$ has the following density function
\begin{equation}\label{agig}
f_a(y)=\frac{(\sqrt{\psi/\chi})^{\lambda} (\psi+a^2)^{\frac{1}{2}-\lambda}}{\sqrt{2\pi}K_{\lambda}(\sqrt{\chi \psi})}\times  \frac{K_{\lambda-\frac{1}{2}}(\sqrt{(\chi+y^2)(\psi+a^2)})e^{ay}}{(\sqrt{(\chi+y^2)(\psi+a^2)})^{\frac{1}{2}-\lambda}},
\end{equation}
where $K_{\lambda}(\cdot)$ is the modified Bessel function of the third kind. Write $y_{\beta}(a)=:VaR_{\beta}(Y_a)$ and note that $-y_{\beta}(a)$ is the $\beta$ quantile of $Y_a$. By using the definitions of VaR and CVaR, we obtain the following relations:
\begin{lemma} We have
\begin{equation}\label{beta}
    \beta=\int_0^{+\infty}g(s)\Phi(\frac{-y_{\beta}(a)-as}{\sqrt{s}})ds,
\end{equation}
and
\begin{equation}\label{beta2}
CVaR_{\beta}(Y_a)=-\frac{1}{\beta}\int_0^{+\infty}\Big [as\Phi(\frac{-y_{\beta}(a)-as}{\sqrt{s}})-\sqrt{\frac{s}{2\pi}}e^{-\frac{(y_{\beta}(a)+as)^2}{2s}}\Big ]g(s)ds,
\end{equation}
where $\Phi$ is the  cumulative distribution  function of the standard normal random variable.
\end{lemma}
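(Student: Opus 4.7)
The plan is to prove both identities by conditioning on the mixing variable $Z$ and exploiting that, from $Y_a = aZ + \sqrt{Z}\,N$ with $N$ independent of $Z$, the conditional law $Y_a\mid Z=s$ is $N(as, s)$.

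For (\ref{beta}), the density $f_a$ of $Y_a$ is strictly positive on $\R$ (visible from (\ref{agig}) in the GIG case, and in general from the mixture representation), so the upper $\beta$-quantile satisfies $P(Y_a\le -y_\beta(a))=\beta$. The tower property applied with conditioning on $Z$ then gives
\begin{equation*}
\beta \;=\; \int_0^{+\infty} P\bigl(Y_a\le -y_\beta(a)\mid Z=s\bigr)\, g(s)\,ds \;=\; \int_0^{+\infty} \Phi\!\Big(\tfrac{-y_\beta(a)-as}{\sqrt{s}}\Big)\, g(s)\,ds,
\end{equation*}
which is exactly the first claim.

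For (\ref{beta2}), I would start from the identity $CVaR_\beta(Y_a) = -\tfrac{1}{\beta}\,E[Y_a\,1_{\{Y_a\le -y_\beta(a)\}}]$ stated in the Introduction, and condition on $Z$ once more. The one nontrivial ingredient is the standard truncated first-moment formula for $W\sim N(\mu,\sigma^2)$,
\begin{equation*}
E\bigl[W\,1_{\{W\le c\}}\bigr]\;=\;\mu\,\Phi\!\Big(\tfrac{c-\mu}{\sigma}\Big)\;-\;\sigma\,\phi\!\Big(\tfrac{c-\mu}{\sigma}\Big),
\end{equation*}
where $\phi$ denotes the standard normal density; this follows by substituting $u=(w-\mu)/\sigma$ and using the antiderivative $\int u\,\phi(u)\,du = -\phi(u)$. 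Applying this with $\mu=as$, $\sigma=\sqrt{s}$, $c=-y_\beta(a)$, and rewriting $\sqrt{s}\,\phi((-y_\beta(a)-as)/\sqrt{s}) = \sqrt{s/(2\pi)}\,e^{-(y_\beta(a)+as)^2/(2s)}$, one obtains the integrand of (\ref{beta2}); integrating against $g(s)$ and dividing by $-\beta$ gives the result.

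There is no real obstacle here: the calculations are routine. The only mild technical point is the Fubini/Tonelli interchange of the expectation with the $s$-integral, which is justified because $Y_a$ has a finite first absolute moment under the integrability assumptions on $Z$ that are in force throughout the paper.
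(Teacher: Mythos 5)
Your proposal is correct and follows essentially the same route as the paper's own proof: both condition on $Z$ to reduce the quantile identity to an integral of $\Phi$ against $g$, and both compute the CVaR numerator via the truncated first moment of $N(as,s)$. You merely spell out the truncated-moment formula and the Fubini justification more explicitly than the paper does.
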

\begin{proof} By the definition of $y_{\beta}(a)$, we have
$P(Y_a\le -y_{\beta}(a))=\beta$. We have $P(Y_a\le -y_{\beta}(a))=\int_0^{+\infty}P(N\le\frac{-y_{\beta}(a)-as}{\sqrt{s}})g(s)ds=\int_0^{+\infty}\Phi(\frac{-y_{\beta}(a)-as}{\sqrt{s}})g(s)ds$. This completes the proof of (\ref{beta}). To obtain (\ref{beta2}), note that $CVaR_{\beta}(Y_a)=-E[Y_a/Y_a\le -y_{\beta}(a)]=-\frac{E[Y_a1_{\{Y_a\le -y_{\beta}(a)\}}]}{P(Y_{a}\le -y_{\beta}(a))}$. We have $P(Y_a\le -y_{\beta}(a))=\beta$ by the definition of $y_{\beta}(a)$. The numerator equals
\[
E[Y_a1_{\{Y_a\le -y_{\beta}(a)\}}]=\int_0^{+\infty}E[N(as, s)1_{\{N(as, s)\le -y_{\beta}(a)\}}]g(s)ds.
\]
We can easily calculate $E[N(as, s)1_{\{N(as, s)\le -y_{\beta}(a)\}}]=as\Phi(\frac{-y_{\beta}(a)-as}{\sqrt{s}})-\sqrt{
\frac{s}{2\pi}}e^{-\frac{(y_{\beta}(a)+as)^2}{2s}}$ and this completes the proof.
\end{proof}

Now, if we apply (\ref{26}) to the risk measures VaR and CVaR, we obtain
\begin{equation}\label{beta1}
\begin{split}
VaR_{\beta}(\omega^TX)=Var_{\beta}(x^TY)=&-x^T\mu_0+VaR_{\beta}(Y_a)\sqrt{x^Tx}\\
=&-x^T\mu_0+y_{\beta}(a)\sqrt{x^Tx},
\end{split}
\end{equation}
and
\begin{equation}\label{beta3}
CVaR_{\beta}(\omega^TX)=CVaR_{\beta}(x^TY)=-x^T\mu_0+CVaR_{\beta}(Y_a)\sqrt{x^Tx},
\end{equation}
where $x^T=\omega^TA$, $a=||\gamma_0||cos(x, \gamma_0)$, and  $y_{\beta}$ satisfies (\ref{beta}). Therefore, optimization problems like $\min_DCVaR_{\beta}(\omega^TX)$ or $\min_DVaR_{\beta}(\omega^TX)$ for some domain $D$ of portfolios involve computing (\ref{beta}) or (\ref{beta2}) for each $x^T=\omega^TA$.

Below, we apply Proposition 2.8 and obtain simpler expressions for the VaR and CVaR.

\begin{theorem}\label{312} The $VaR_{\beta}(\cdot)$ and $CVaR_{\beta}(\cdot)$ can be approximated by the following $V_{\beta}(\cdot)$ and $CV_{\beta}(\cdot)$ respectively
\begin{equation}\label{mm}
V_{\beta}(\omega^TX)=: -x^T\mu_0+\sqrt{x^Tx}\Big [w_{+}+w_{-}Cos[\theta(x, \gamma_0)]\Big ],
\end{equation}
and
\begin{equation}\label{mmm}
CV_{\beta}(\omega^TX)=:-x^T\mu_0+\sqrt{x^Tx}\Big [v_{+}+v_{-}Cos[\theta(x, \gamma_0)]\Big ],
\end{equation}
where $x=\omega^TA$ and the constants $w_{+}, w_{-}, v_{+}, v_{-},$ are given by
\begin{equation}\label{4a6}
\begin{split}
w_{+}=&\frac{VaR_{\beta}(Y_b)+VaR_{\beta}(Y_{-b})}{2},\;\; v_{+}=\frac{CVaR_{\beta}(Y_b)+CVaR_{\beta}(Y_{-b})}{2},\\
w_{-}=& \frac{VaR_{\beta}(Y_b)-VaR_{\beta}(Y_{-b})}{2},\; v_{-}=\frac{CVaR_{\beta}(Y_b)-CVaR_{\beta}(Y_{-b})}{2},
\end{split}
\end{equation}
where $Y_b=bZ+\sqrt{Z}N$ and $Y_{-b}=-bZ+\sqrt{Z}N$ with $b=||\gamma_0||$.
\end{theorem}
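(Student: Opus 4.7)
The plan is to derive Theorem \ref{312} as a direct specialization of the secant--line construction of Proposition \ref{3and7} to the two concrete risk measures $\rho = VaR_\beta$ and $\rho = CVaR_\beta$. First, I would observe that although Lemma \ref{lem2} and Proposition \ref{3and7} are phrased for law invariant coherent risk measures, the only properties of $\rho$ actually invoked in their proofs are law invariance, cash invariance, and positive homogeneity; subadditivity is never used for the identities themselves (it is only needed later, in Lemma \ref{monotone}, to derive monotonicity and convexity of $h$, which are not required here). As noted in the paragraph preceding the theorem, $VaR_\beta$ is a positively homogeneous monetary risk measure, and $CVaR_\beta$ is a coherent risk measure; both therefore admit the representation
\begin{equation*}
\rho(\omega^TX) = -x^T\mu_0 + \sqrt{x^Tx}\, h(a), \qquad h(a) := \rho(aZ + \sqrt{Z}N),
\end{equation*}
with $a = ||\gamma_0||\cos[\theta(x,\gamma_0)]$ ranging in $[-b,b]$ for $b := ||\gamma_0||$.

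Next I would carry out the linear interpolation of Proposition \ref{3and7} explicitly. The affine function $\ell(a)$ through the endpoints $(-b,h(-b))$ and $(b,h(b))$ is
\begin{equation*}
\ell(a) = \frac{h(b)+h(-b)}{2} + \frac{h(b)-h(-b)}{2b}\, a,
\end{equation*}
and substituting $a = b\cos[\theta(x,\gamma_0)]$ collapses this to
\begin{equation*}
\ell(a) = \frac{h(b)+h(-b)}{2} + \frac{h(b)-h(-b)}{2}\cos[\theta(x,\gamma_0)].
\end{equation*}
Replacing $h(a)$ by $\ell(a)$ in the representation above yields an approximation of $\rho(\omega^TX)$ whose constant and cosine coefficients coincide with $(w_+,w_-)$ when $\rho = VaR_\beta$, by the definitions in (\ref{4a6}); this gives formula (\ref{mm}) for $V_\beta(\omega^TX)$. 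Specializing instead to $\rho = CVaR_\beta$ and repeating the same substitution gives (\ref{mmm}) with the constants $(v_+,v_-)$.

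The whole argument is algebraic once one accepts the secant--line approximation of Proposition \ref{3and7}, so there is no real technical obstacle. The only point that requires some care is the opening observation about applicability: one must re--inspect the proofs of Lemma \ref{lem2} and Proposition \ref{3and7} and confirm that subadditivity is never invoked, so that the same identities hold verbatim for $VaR_\beta$. Once this is verified, the theorem follows by straightforward substitution and rearrangement.
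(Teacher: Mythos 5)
Your proof is correct and follows essentially the same route as the paper, whose entire argument for Theorem \ref{312} is the single line ``the proof follows from Proposition \ref{3and7}'': you simply specialize the secant-line approximation of $h(a)=\rho(aZ+\sqrt{Z}N)$ at the endpoints $a=\pm\|\gamma_0\|$ to $\rho=VaR_\beta$ and $\rho=CVaR_\beta$ and read off the coefficients $(w_\pm,v_\pm)$. Your opening observation --- that Lemma \ref{lem2} and Proposition \ref{3and7} use only law invariance, cash invariance, and positive homogeneity, so the construction legitimately applies to $VaR_\beta$ even though it is not coherent --- is a point the paper's one-line proof silently glosses over, and including it makes the argument strictly more complete than the original.
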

\begin{proof}
The proof follows from Proposition (\ref{31}).
\end{proof}

We remark that optimization problems like
$\min_{R^n}CVaR_{\beta}(\omega^TX)$ or $\min_{R^n}VaR_{\beta}(\omega^TX)$ can be replaced by $\min_{R^n}CV_{\beta}(\omega^TX)$ and  $\min_{R^n}V_{\beta}(\omega^TX)$ respectively by using the expressions in (\ref{mm}) and (\ref{mmm}). The latter ones are computationally much efficient as they involve the values of the VaR and CVaR of $Y_b=bZ+\sqrt{Z}N$ and $Y_{-b}=-bZ+\sqrt{Z}N$ only for $b=||\gamma_0||$.



\section{Numerical results}
In this section, we numerically check the performance of our results. First, we fit the GH distribution to empirical data of five stocks by using the EM algorithm. For this, we use five years of price history of the stocks AMD, CZR, ENPH, NVDA,  and TSLA from the  2nd of January 2015 to the 30th of December 2020. The following table gives a summary of this data
\begin{table}[h]
\centering
\caption{Data Describe}
\begin{tabular}{cllllll}
\hline
Name & AMD       & CZR       & ENPH      & NVDAC      & TSLA      \\
\hline
\hline
mean & 0.003121  & 0.002738  & 0.003218  & 0.002568  & 0.002432  \\
std  & 0.040087  & 0.04008   & 0.056228  & 0.028707  & 0.034737  \\
min  & -0.242291 & -0.37505  & -0.373656 & -0.187559 & -0.210628 \\
max  & 0.522901  & 0.441571  & 0.424446  & 0.298067  & 0.198949  \\ \hline
\end{tabular}
\end{table}

We apply the modified EM scheme to fit the daily log-returns of these stocks to  5-dimensional GH distributions. The algorithm is called multi-cycle, expectation, conditional estimation (MCECM) procedure, see \cite{McNeil_Alexander_J_And_Frey_Rudigger_And_Embrechts_Paul_2015} \cite{Meng_Xiao-Li_And_Rubin_Donald_B_1993} for the details of this algorithm. First, we fit the return data to the model $\gamma Z+\sqrt{Z}AN_n$. In this case, the estimated parameters for $Z$
are
\begin{gather*}
    \lambda = -\frac{1}{2},\\
    \chi = 0.87953198,\\
    \psi = 0.645169932,\\
\end{gather*}
and
\begin{equation*}
\begin{split}
  \gamma =& \left[ 0.00268318,	0.00147543,	0.00273905,	0.00145453,	0.00180711\right],\\
    \Sigma =& \begin{bmatrix}
& 0.001341 & 0.000253 & 0.000398 & 0.000529 & 0.000333 \\
 & 0.000253 & 0.001034 & 0.0003   & 0.00025  & 0.000269 \\
 & 0.000398 & 0.0003   & 0.00285  & 0.000274 & 0.000321 \\
 & 0.000529 & 0.00025  & 0.000274 & 0.000675 & 0.000311 \\
 & 0.000333 & 0.000269 & 0.000321 & 0.000311 & 0.00109  \\
    \end{bmatrix}.
\end{split}
\end{equation*}

With these parameters, we have $EZ=\frac{\chi}{\psi}, Var(Z)=\frac{\chi}{\psi^3}, m_3(Z)=\frac{3\chi}{\psi^5}$. We also have $m=\gamma_0EZ=\frac{\chi}{\psi} \gamma_0, \; e_A=\Sigma^{-\frac{1}{2}}e$.
As in \cite{Zhao_Shangmei_And_Lu_Qing_And_Han_Liyan_And_Liu_Yong_And_Hu_Fei_2015}, we consider returns  $r\in [0, 0.2]$. We take $r_i=\frac{0.2}{100}i, i=0,1,2,\cdots, 100$. For each $r_i$, we calculate $x^i, i=1,2, \cdots, 20,$ from $x^i=s_im+t_ie_A$, where $s_i$ and $t_i$ are given by (\ref{mmm}). Then for each $i$, we calculate $\mbox{SKew}((x^i)^TY)$ from (\ref{1a5}), $CVaR_{\beta}((x^i)^TY)$ from (\ref{HK1}) and we plot $(\mbox{SKew}((x^i)^TY), r_i, CVaR_{\beta}((x^i)^TY))$ in three-dimensional space. Table \ref{Optimal weights of the 5 stocks and the skewness} gives a summary of the optimal portfolios and the corresponding skewness. Figure \ref{Six_Stocks} is the mean-CVaR-skewness
efficient frontier.

\begin{figure}[H]
    \centering
    \includegraphics[scale=0.7]{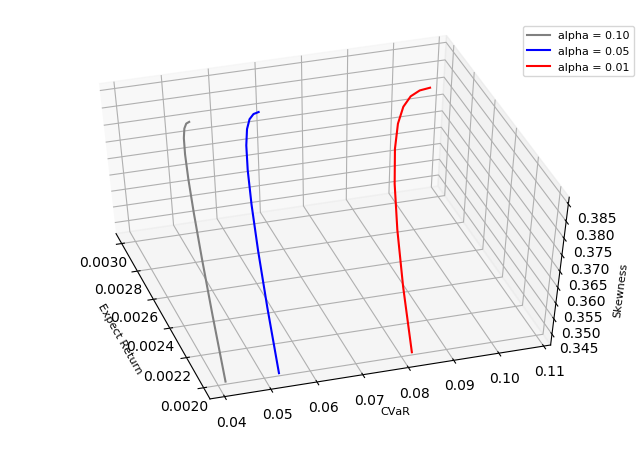}
    \caption{The mean-CVaR-skewness frontier}
    \label{Six_Stocks}
\end{figure}

\begin{table}[h]
\centering
\caption{Optimal weights of the 5 stocks and the skewness}
\label{Optimal weights of the 5 stocks and the skewness}
\begin{tabular}{llllll}
\hline
\hline
Expect Return & 0.0020   & 0.0022   & 0.0024   & 0.0027   & 0.0029   \\ \hline
\hline
$\omega_1$           & 0.077077 & 0.194069 & 0.31106  & 0.428051 & 0.545042 \\
$\omega_2$             & 0.252863 & 0.22433  & 0.195798 & 0.167265 & 0.138732 \\
$\omega_3$            & 0.067729 & 0.101723 & 0.135716 & 0.169709 & 0.203703 \\
$\omega_4$            & 0.399764 & 0.26734  & 0.134915 & 0.00249  & -0.12994 \\
$\omega_5$       & 0.202566 & 0.212539 & 0.222512 & 0.232485 & 0.242458 \\
Skewness       & 0.34231  & 0.370487 & 0.383957 & 0.385706 & 0.380047 \\
\hline
\end{tabular}
\end{table}

In the second case, we fit the return data of the above listed five stocks
to the model
\[
X=\mu+\gamma Z+\sqrt{Z}N_n.
\]
The following table lists the numerical values for the parameters of our fit.

\begin{gather*}
    \lambda = -0.378655004,\\
    \chi = 0.379275063,\\
    \psi = 0.371543387,\\
\end{gather*}
The values of $\gamma$ and $\Sigma$ are as follows.
\begin{equation*}
\centering
\label{Params_GH}
\begin{split}
  \mu =& \left[ 0.00041332,	0.00152207,	0.00058012,	0.00156685,	0.0006603\right],\\
  \gamma = & \left[ 0.00163631,	0.00073499,	0.00159418,	0.000605,	0.00107086 \right],\\
    \Sigma =& \begin{bmatrix}
 & 0.001341 & 0.000253 & 0.000398 & 0.000529 & 0.000333 \\
 & 0.000253 & 0.001034 & 0.0003   & 0.00025  & 0.000269 \\
 & 0.000398 & 0.0003   & 0.00285  & 0.000274 & 0.000321 \\
 & 0.000529 & 0.00025  & 0.000274 & 0.000675 & 0.000311 \\
 & 0.000333 & 0.000269 & 0.000321 & 0.000311 & 0.00109  \\
    \end{bmatrix}.
\end{split}
\end{equation*}
By using the definitions of $\mu_0$ and $\gamma_0$, we calculate them as follows.
\begin{equation*}
\centering
\begin{split}
  \mu_0 =& \left[-0.0064934,	0.0357838,	0.0046229,	0.0524977,	0.0067647\right],\\
  \gamma_0 = & \left[ 0.0339663, 0.0125813,	0.0211417,	0.0018441,	0.0207596 \right].\\
\end{split}
\end{equation*}

Table \ref{VaR_3.12} tests  the  performance  of  our  Theorem  3.12.   The  value  of  the VaR  is  calculated  by a numerical root finding method and also by using Theorem 3.12.  It can be seen that the results match very well.

\begin{table}[h]
\caption{Numerical VaR vs. Theorem 3.12}
\centering
\label{VaR_3.12}
\begin{tabular}{l|lll|lll}
\hline
\multicolumn{1}{c|}{Portfolio Weights} & \multicolumn{3}{c|}{VaR}     & \multicolumn{3}{c}{V}       \\
\multicolumn{1}{c|}{}                                  & {$\beta$} = 0.1     & {$\beta$} = 0.05    & {$\beta$} = 0.01    & {$\beta$} = 0.1     & {$\beta$} = 0.05    & {$\beta$} = 0.01    \\ \hline \hline
{[}0.1, 0.4, 0.2, 0.1, 0.2{]}         & 0.027729      & 0.042165      & 0.082055       & 0.029828      & 0.044277      & 0.084206       \\
{[}0.2, 0.1, 0.5, 0.1, 0.1{]}         & 0.038508      & 0.058196      & 0.112613       & 0.040006      & 0.059714      & 0.114193       \\
{[}0.1, 0.4, 0.1, 0.3, 0.1{]}         & 0.02568       & 0.039183      & 0.076513       & 0.02816       & 0.041678      & 0.079054       \\
{[}0.3, 0.1, 0.3, 0.1, 0.2{]}         & 0.03131       & 0.047384      & 0.091771       & 0.032775      & 0.048857      & 0.093264       \\
{[}0.1, 0.3, 0.1, 0.3, 0.2{]}         & 0.025091      & 0.038256      & 0.074639       & 0.027397      & 0.040574      & 0.076994       \\ \hline
\end{tabular}

\end{table}
\newpage

In Table \ref{CVaR_3.12} we compare the numerical calculation of the CVaR with the performance of the approximation of the CVaR in Theorem 3.12.  Again, it can be seen that the results match very well.

\begin{table}[h]
\centering
\caption{Numerical CVaR vs. Theorem 3.12}
\label{CVaR_3.12}
\begin{tabular}{l|lll|lll}
\hline
\multicolumn{1}{c|}{Portfolio Weights}     & \multicolumn{3}{c|}{CVaR}    & \multicolumn{3}{c}{CV}      \\
\multicolumn{1}{r|}{}    & {$\beta$} = 0.1     & {$\beta$} = 0.05    & {$\beta$} = 0.01    & {$\beta$} = 0.1     & {$\beta$} = 0.05    & {$\beta$} = 0.01    \\ \hline \hline
{[}0.1, 0.4, 0.2, 0.1, 0.2{]}         & 0.050643      & 0.067315      & 0.111296       & 0.050654      & 0.067341      & 0.111366       \\
{[}0.2, 0.1, 0.5, 0.1, 0.1{]}         & 0.069764      & 0.092505      & 0.152508       & 0.0698        & 0.092567      & 0.15264        \\
{[}0.1, 0.4, 0.1, 0.3, 0.1{]}         & 0.04712       & 0.062719      & 0.103883       & 0.047136      & 0.062755      & 0.103972       \\
{[}0.3, 0.1, 0.3, 0.1, 0.2{]}         & 0.056816      & 0.075369      & 0.124296       & 0.056815      & 0.075376      & 0.124327       \\
{[}0.1, 0.3, 0.1, 0.3, 0.2{]}         & 0.04599       & 0.061195      & 0.10131        & 0.046         & 0.06122       & 0.101378       \\ \hline
\end{tabular}
\end{table}


\section{Conclusion}
\cite{Zhao_Shangmei_And_Lu_Qing_And_Han_Liyan_And_Liu_Yong_And_Hu_Fei_2015} showed that mean-CVaR-skewness portfolio optimization problems based on asymmetric Laplace distributions can be transformed into  quadratic optimization problems. In this note, we extended their result and showed that mean-risk-skewness portfolio optimization problems based on a larger class of NMVM models can also be transformed into quadratic optimization problems under any law invariant risk measure.
The critical step to achieve this was to  transform the original portfolio space into another space by an appropriate linear transformation, a step which  enabled us to express both the risk and skewness as functions of a single  variable. By showing that any law invariant coherent risk measure is an increasing function and skewess is a decreasing function of this variable, we were able to transform the original optimization problem into a quadratic optimization problem as in  \cite{Zhao_Shangmei_And_Lu_Qing_And_Han_Liyan_And_Liu_Yong_And_Hu_Fei_2015}. In the rest of this paper, we made use of this transformation to come up with approximate closed form formulas for law invariant risk measures and hence also for the VaR and CVaR. Our numerical tests show that such closed form approximations are accurate.
\bibliographystyle{MybibMeanSkew}
\bibliography{main}
\end{document}